\pgfplotsset{compat=1.10}
\definecolor{linkblue}{HTML}{001487}
\theoremstyle{plain}
\newtheorem{theorem}{Theorem}[section]
\newtheorem{lemma}[theorem]{Lemma}
\crefname{question}{Question}{Questions}
\theoremstyle{definition}
\newtheorem{remark}[theorem]{Remark}
\newcommand*{\ee}{\mathrm{e}}
\newcommand*{\cH}{\mathcal{H}}
\newcommand*{\cI}{\mathcal{I}}
\newcommand*{\cO}{O}
\newcommand*{\cR}{\mathcal{R}}
\newcommand*{\cS}{\mathcal{S}}
\newcommand*{\cT}{\mathcal{T}}
\newcommand*{\cU}{\mathcal{U}}
\newcommand*{\cV}{\mathcal{V}}
\newcommand*{\N}{\mathbb{N}}
\newcommand*{\C}{\mathbb{C}}
\newcommand*{\St}{\mathrm{S}}
\newcommand*{\dist}{\mathrm{dist}}
\newcommand*{\eps}{\varepsilon}
\newcommand*{\id}{\mathds{1}}
\newcommand*{\poly}{\mathrm{poly}}
\newcommand*{\tr}{\mathrm{tr}}
\newcommand*{\ket}[1]{| #1 \rangle}
\newcommand*{\bra}[1]{\langle #1 |}
\newcommand*{\spr}[2]{\langle #1 | #2 \rangle}
\newcommand*{\ci}{\mathrm{i}} 
\newcommand{\norm}[1]{\left\lVert#1\right\rVert}
\newcommand{\epsjodd}{\eps_{j_{\textnormal{odd}}}}
\newcommand{\epsjeven}{\eps_{j_{\textnormal{even}}}}
\DeclareRobustCommand{\abbrevcrefs}{%
\Crefname{theorem}{Thm.}{Thms.}%
\Crefname{corollary}{Cor.}{Cors.}%
\Crefname{lemma}{Lem.}{Lems.}%
\Crefname{proposition}{Prop.}{Props.}%
\Crefname{equation}{Eq.}{Eqs.}%
\Crefname{example}{Ex.}{Exs.}%
}
\DeclareRobustCommand{\Cshref}[1]{{\abbrevcrefs\Cref{#1}}}
\title{Approximate Quantum Fourier Transform \\in Logarithmic Depth on a Line}
\author{Elisa~B\"aumer\thanks{eba@zurich.ibm.com}\phantom{,}}
\author{David~Sutter}
\author{Stefan~Woerner}
\affil{\small IBM Quantum, IBM Research Europe -- Zurich}
\date{}
\begin{document}

\maketitle

\vspace{-8mm}

\begin{abstract}
The approximate quantum Fourier transform (AQFT) on $n$ qubits can be implemented in logarithmic depth using $8n$ qubits with all-to-all connectivity, as shown in [Hales, PhD Thesis Berkeley, 2002]. However, realizing the required all-to-all connectivity can be challenging in practice. In this work, we use dynamic circuits, i.e., mid-circuit measurements and feed-forward operations, to implement the AQFT in logarithmic depth using only $4n$ qubits arranged on a line with nearest-neighbor connectivity. 
Furthermore, for states with a specific structure, the number of qubits can be further reduced to $2n$ while keeping the logarithmic depth and line connectivity. 
As part of our construction, we introduce a new implementation of an adder with logarithmic depth on a line, which allows us to improve the AQFT construction of Hales.
\end{abstract}


\section{Introduction} \label{sec_intro}
It is widely believed that quantum computers can solve problems that no classical algorithm can solve in a reasonable amount of time. Arguably, the most prominent example is Shor's algorithm~\cite{shor_algo} to factor an arbitrary integer in polynomial time, whereas the most efficient known classical factoring algorithms have a sub-exponential runtime~\cite{factoring_book}. 

To quantify the cost of implementing a quantum algorithm represented by a quantum circuit, typically three quantities are considered: (i) the width, i.e., the number of qubits, (ii) the depth, i.e., the number of consecutive time steps necessary to execute the circuit, where a time step is a single application of the maximum number of gates such that all applied gates can be executed simultaneously, and (iii) the size, i.e., the total number of gates.
Width and depth are arguably the most operationally relevant metrics, as they quantifiy the required number of qubits and resulting runtime of an algorithm.
While occasionally only the size and width of an algorithm are analyzed, this only gives a lower bound on the depth. In contrast, analyzing the width and depth directly results in an upper bound on the size.
Here, we assume that gates on distinct subsets of qubits that are applied in the same time slice can be executed in parallel.

One crucial building block in Shor's algorithm is the quantum Fourier transform (QFT). In~\cite{clevewatrous2000}, Cleve and Watrous showed that the approximate quantum Fourier transform (AQFT) on $n$ qubits can be implemented in depth $\cO(\log n)$ with all-to-all connectivity between the qubits and $\cO(n\log n)$ ancilla qubits. They also reduced the circuit size of the exact QFT from $\cO(n^2)$ to $\cO(n(\log n)^2 \log \log n)$, still with a depth of $\cO(n)$ though. Later in~\cite{hales2002}, Hales showed that the AQFT can be implemented in the same depth, however, requiring only $8n$ qubits. There have been further improvements regarding the size of the AQFT but with linear depth~\cite{Ahokas2004}. The depth can be further reduced to constant when using $\cO(n^3\log n)$ qubits and all-to-all connectivity~\cite{hoyer2005,buhrman2023state}. 
Here, we focus on logarithmic depth constructions that require only $\cO(n)$ qubits on a line.

A natural question is whether the width or connectivity requirements to implement the AQFT can be further improved while preserving logarithmic depth. In this work, we show that 
\begin{enumerate}[(1)]
    \item we can relax the connectivity requirements from all-to-all to a minimum, i.e., to a line of qubits with nearest-neighbor connectivity (see~\cref{tab_QFT_depth}).
    \item we can reduce the number of qubits to $4n$ for a line of qubits with nearest-neighbor connectivity and to 
    $3n$ for an all-to-all connectivity  (see~\cref{tab_QFT_width}).
\end{enumerate}
We refer to~\cref{thm_QFT_1D} for the mathematically rigorous statement.
Our algorithm builds on the results from~\cite{hales2002} and we use the power of dynamic circuits, i.e., mid-circuit measurements and feed-forward operations, to implement the circuits on a line without increasing the required depth. 
In particular, we show how to implement a quantum adder in logarithmic depth on a line of qubits using dynamic circuits, enabling our main result.
For states with specific structure, we do not even require dynamic circuits and can implement the AQFT on a line of only $2n$ qubits.

\subsection{Main results}
Let $N:=2^n$, where $n \in \mathbb{N}$ is the number of qubits. Furthermore, let $\omega_N:=\ee^{\frac{2\pi \ci}{N}}$ be the $N$-the root of unity. 
We denote the quantum Fourier state corresponding to the computational basis state $\ket{j}$, $j \in \{0, \ldots, N{-}1\}$, by
\begin{align}
\ket{\phi(j)} := \frac{1}{\sqrt{N}} \sum_{k=0}^{N-1} \omega_N^{j k} \ket{k} \, .
\end{align}
Note that $\ket{\phi(0)} = \frac{1}{\sqrt{N}} \sum_{k=0}^{N-1} \ket{k} = H^{\otimes n} \ket{0}$, i.e., we can switch between $\ket{\phi(0)}$ and $\ket{0}$ by applying a layer of Hadamard gates.
The QFT mapping performs the operation
\begin{align} \label{eq_QFT_def}
   \ket{j}_A \ket{0}_B
    \overset{\textnormal{QFT}_{AB}}{\longrightarrow}  \ket{\phi(j)}_A \ket{0}_B    \, ,
\end{align}
where $A$ denotes the $n$-qubit target register and $B$ is an $n$-qubit ancilla register that is required by the construction used below.
Note that the resulting order of qubits in $A$ is reversed.

The AQFT is a unitary that is close to the exact QFT. More precisely, 
for $\eps >0$, an AQFT acting on an $n$-qubit register $A$, as well as an $n$-qubit ancilla register $B$, is defined as an isometry \smash{$\mathrm{QFT}^{(\eps)}_{AB}$}, such that
\begin{align}
\dist_{\cS}(\mathrm{QFT}_{AB},\mathrm{QFT}^{(\eps)}_{AB})
\leq \eps \, ,  \label{eq:AQFT_def}
\end{align}
where $\dist_{\cS}(\cdot,\cdot)$ will be defined formally in~\Cref{sec:notation}, $\mathrm{QFT}_{AB}$ denotes the quantum Fourier transform unitary defined in~\cref{eq_QFT_def} and $\cS$ is the set of arbitrary states in $A$ and $E$ with $\dim (E) = M \geq 2^{2n}$ and $\ket{0}$ on an $n$-qubit ancilla register $B$, i.e.,
\begin{align}
 \cS:=\Big\{\ket{\psi}_{ABE} \in \cH_{ABE}: \ket{\psi}_{ABE}=\sum_{j=0}^{N-1}\sum_{m=0}^{M-1} \alpha_{j,m} \ket{j}_A \ket{0}_B \ket{m}_{E} \Big\}    \, ,\label{eq_set_S}
\end{align}
where $\ket{\psi}_{ABE} \in \cH_{ABE}$ denotes the set of normalized, pure quantum states on a tripartite Hilbert space $\cH_{ABE}$.
Before explaining the AQFT operation in full generality, we consider a restricted version thereof that assumes input states that have a certain structure. More precisely, we focus on input states from the following set 
\begin{align} \label{eq_S_uni}
 \cS_{\mathrm{uni}}^{(p)}:=\Big\{\ket{\psi}_{ABE} \in \cH_{ABE}: \ket{\psi}_{ABE}=\sum_{j=0}^{N-1}\sum_{m=0}^{M-1} \beta_{j,m} \ket{j}_A \ket{0}_B \ket{m}_{E}  \, , |\sum_{m=0}^{M-1} \beta_{j,m}\beta_{\ell,m}^\ast|\leq\frac{p(n)}{N}\delta_{j,\ell} \; \forall j,\ell \Big\} \, ,
\end{align}
with some environment $E$ with dimension $\dim (E) = M \geq 2^{2n}$, where $\delta_{j,\ell}$ denotes a Kronecker delta, and $p(n)$ denotes a polynomial in $n$ with fixed degree independent of $n$, i.e., states with amplitudes that are somewhat uniformly distributed in $A$. If we choose the polynomial $p(n)$ in~\cref{eq_S_uni} to be the constant $1$ we obtain the set $\cS_{\mathrm{uni}}^{(1)}$.  
Note that clearly $\cS_{\mathrm{uni}}^{(p)} \subseteq \cS$.
For $\eps >0$ the AQFT acting on states in $\cS_{\mathrm{uni}}^{(p)}$, is defined as a unitary \smash{$\mathrm{QFT}_{\mathrm{uni}}^{(\eps)}$}, such that
\begin{align}
\dist_{\cS_{\mathrm{uni}}^{(p)}}(\mathrm{QFT},\mathrm{QFT}_{\mathrm{uni}}^{(\eps)})
\leq \eps \, .  \label{eq:AQFT_uni_def}
\end{align}

\begin{lemma}[AQFT for uniform inputs] \label{lem_QFT_1D}
Let $n \in \N$, \smash{$\frac{1}{\poly(n)}  \leq \eps < 1$} and \smash{$\cS_{\mathrm{uni}}^{(p)}$} as defined in~\cref{eq_S_uni}. A unitary \smash{$\mathrm{QFT}^{(\eps)}_{\mathrm{uni}}$} acting on \smash{$\cS_{\mathrm{uni}}^{(p)}$} satisfying~\cref{eq:AQFT_uni_def} can be implemented on a line of $2n$ qubits with nearest-neighbor connectivity with depth \smash{$\cO(\log \frac{n p(n)}{\eps^2})$}. Specifically, if acting on states in \smash{$\cS_{\mathrm{uni}}^{(1)}$}, it can be implemented in depth $\cO(\log \frac{n}{\eps^2})$. 
\end{lemma}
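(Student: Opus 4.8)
The plan is to realize \smash{$\mathrm{QFT}^{(\eps)}_{\mathrm{uni}}$} as the \emph{banded} (truncated) quantum Fourier transform, following the construction of Coppersmith used by Hales~\cite{hales2002}, and then to implement that banded circuit on a line with the logarithmic-depth adder introduced in this paper. Recall that $\mathrm{QFT}\ket{j}=\ket{\phi(j)}$ is a product state whose $\ell$-th qubit carries the phase $2\pi j/2^{\ell+1}$, assembled by a Hadamard followed by controlled phase rotations $R_k$ of angle $2\pi/2^k$. The banded transform with bandwidth $b$ keeps only the rotations with $k\le b$ and discards the rest; equivalently, the phase on output qubit $\ell$ is computed from the window of its $b$ most significant relevant input bits. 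Write $\ket{\tilde\phi(j)}$ for this approximate output and $\ket{g(j)}:=\ket{\phi(j)}-\ket{\tilde\phi(j)}$. The integer $b$ is the only free parameter, and I would fix it at the end so that the distance is at most $\eps$.

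For the error bound I would exploit the structure of \smash{$\cS_{\mathrm{uni}}^{(p)}$}. Since both outputs are product states differing only by per-qubit phase defects $\Delta_\ell(j)$, and every discarded rotation has angle at most $2\pi/2^{b}$, the accumulated defect obeys $\sum_\ell|\Delta_\ell(j)|=\cO(n\,2^{-b})$, whence $\norm{g(j)}^{2}=2-2\,\Real\spr{\phi(j)}{\tilde\phi(j)}=\cO(n^{2}2^{-2b})$ uniformly in $j$. For an input $\ket{\psi}=\sum_{j,m}\beta_{j,m}\ket{j}_A\ket{0}_B\ket{m}_E\in\cS_{\mathrm{uni}}^{(p)}$ the defining condition in~\cref{eq_S_uni} forces the reduced state on $A$ to be diagonal, with entries $\rho_{jj}=\sum_{m}|\beta_{j,m}|^{2}\le p(n)/N$, so the cross terms between distinct $j$ cancel and
\[
\dist_{\cS_{\mathrm{uni}}^{(p)}}(\mathrm{QFT},\mathrm{QFT}^{(\eps)}_{\mathrm{uni}})^{2}
=\max_{\ket{\psi}\in\cS_{\mathrm{uni}}^{(p)}}\sum_{j=0}^{N-1}\rho_{jj}\,\norm{g(j)}^{2}
\le\frac{p(n)}{N}\sum_{j=0}^{N-1}\norm{g(j)}^{2}
=\cO\!\big(p(n)\,n^{2}\,2^{-2b}\big).
\]
Requiring this to be at most $\eps^{2}$ gives $2^{-2b}=\cO\big(\eps^{2}/(p(n)\,n^{2})\big)$, i.e. $b=\cO(\log\tfrac{n\,p(n)}{\eps^{2}})$, and for $p\equiv 1$ this is $\cO(\log\tfrac{n}{\eps^{2}})$. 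The square on $\eps$ is precisely the statement that one controls a $2$-norm distance, whose square is the quantity estimated above; the factor $p(n)$ enters through the uniformity bound $\rho_{jj}\le p(n)/N$.

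It remains to implement the bandwidth-$b$ circuit on a line of $2n$ qubits in depth $\cO(b)=\cO(\log\tfrac{n\,p(n)}{\eps^{2}})$ without mid-circuit measurements. The banded transform is local: output qubit $\ell$ is coupled only to the $b$ input bits in its window, so it factors into $\cO(b)$ layers of commuting, short-range controlled phases. The key move is to collect, for every $\ell$ simultaneously, the windowed weighted sum of input bits that fixes its phase; this sliding-window accumulation is exactly an addition of bit-shifted copies of the register, which the logarithmic-depth line adder carries out using a single $n$-qubit workspace register held in the Fourier basis — hence $2n$ qubits in total and depth $\cO(b)$. Because the adder is reversible and returns the workspace to $\ket{0}$, and because the (nearly) uniform structure of $\cS_{\mathrm{uni}}^{(p)}$ is what lets this workspace disentangle cleanly from $A$, no feed-forward is required; this is exactly what distinguishes the present $2n$-qubit construction from the general case, which must route long-range gates by teleportation and therefore pays in extra qubits and dynamic circuits.

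I expect the implementation step to be the main obstacle. Bounding the truncation error is a routine product-state phase estimate, but making the sliding-window phase accumulation fit into a single pass of the line adder on only $2n$ qubits, at depth $\cO(b)$ rather than the naive $\cO(b^{2})$ one would incur from swapping distant qubits, requires care — and verifying that on inputs from $\cS_{\mathrm{uni}}^{(p)}$ the workspace disentangles so that the construction is genuinely measurement-free (rather than merely measurement-free in expectation) is the crux of the argument.
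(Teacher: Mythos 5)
Your approach --- implement the banded (Coppersmith) AQFT directly and bound the truncation error --- fails at exactly the step you flag as ``the main obstacle,'' and the failure is one of principle, not of care. In the in-place banded circuit, the Hadamard on qubit $\ell$ must come \emph{after} the controlled rotation in which qubit $\ell$ acts as control on qubit $\ell-1$ (a control must still be in the computational basis) and \emph{before} the rotations targeting qubit $\ell$ that are controlled on later qubits. Even after discarding all rotations with $k>b$, the nearest-neighbor rotations $R_2$ survive, so the circuit contains the dependency chain $H_1 \prec CR(2\to 1) \prec H_2 \prec CR(3\to 2) \prec H_3 \prec \cdots$, of length $\Omega(n)$; these gates do not commute, so no reordering --- and no ``sliding-window accumulation'' by an adder --- can compress the depth below linear. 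Your adder step also does not do what you need: $\mathrm{ADD}$ maps $\ket{b}\ket{c}\to\ket{b+c}\ket{c}$, which neither computes output phases in parallel nor disentangles a workspace, and the paper's logarithmic-depth line adder (\cref{lem_add}) requires dynamic circuits and $5n$ qubits, whereas you claim a measurement-free single pass on $2n$ qubits; moreover the adder is not used at all in the uniform-input construction. Escaping the $\Omega(n)$ chain is precisely why the paper (following Hales) abandons the in-place circuit: it computes the Fourier state into a \emph{fresh} register via QFS, $\ket{j}_A\ket{\phi(0)}_B\to\ket{j}_A\ket{\phi(j)}_B$, which is fully parallelizable because $A$ stays in the computational basis (\cref{lem_QFS}), and then uncomputes $A$ with the Fourier phase estimation, $\ket{j}_A\ket{\phi(j)}_B\to\ket{0}_A\ket{\phi(j)}_B$, built from parallel exact QFTs on $2k$-qubit blocks (\cref{lem_FPE}); \cref{lem_QFT_1D} then follows by a triangle inequality over the factors with $\eps'=\bar\eps=\eps/2$.

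A symptom that your route cannot be right: uniformity plays no essential role in your error analysis. Your bound $\norm{g(j)}_2=\cO(n2^{-b})$ holds uniformly in $j$, so the banded transform approximates the QFT on \emph{all} inputs; if your implementation claim were correct, you would have proven the stronger statement that the AQFT runs in logarithmic depth on $2n$ qubits on a line for arbitrary states, rendering the paper's \cref{thm_QFT_1D} ($4n$ qubits, dynamic circuits) pointless. In the paper's proof, uniformity is needed exactly where you never confront the problem: the FPE uncomputation fails badly on a set $B$ of bad values $j$ (those where some $k$-bit block is within $2^{k/2}$ of a multiple of $2^k$), and the condition $|\sum_m \beta_{j,m}\beta^\ast_{\ell,m}|\le \frac{p(n)}{N}\delta_{j,\ell}$ of \cref{eq_S_uni} is what caps the amplitude an admissible input can place on $B$, yielding the error bound $\cO\bigl(\sqrt{np(n)/(k2^{k/2})}\bigr)$ and hence the depth $\cO(\log \frac{np(n)}{\eps^2})$ --- which is also where the $\eps^2$ genuinely comes from, not from squaring a $2$-norm.
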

The proof is given in~\cref{sec:uniform}. While it is guaranteed to work for uniform inputs of the form~\cref{eq_S_uni}, it also works for most random input states $\ket{\psi}_{ABE}\in \cS$ (see \cref{remark:randomstate}). Thus, implementing \smash{$\mathrm{QFT}^{(\eps)}_{\mathrm{uni}}$} may be very relevant in practice, as it can always be the first try if the result is verifiable, such as for Shor's algorithm (see~\cref{sec:shors} for more details).
We want to emphasize that our proof uses the ideas introduced in~\cite{hales2002}, maps the algorithm to a line of qubits and reduces its width and depth by constant factors.

The \smash{$\mathrm{QFT}^{(\eps)}_{\mathrm{uni}}$} implementation consists of two elementary operations that will be approximated:
\begin{enumerate}[(a)]
    \item Quantum Fourier state computation (QFS): $\ket{j}\ket{\phi(b)} \rightarrow \ket{j}\ket{\phi(b+j)}$
    \item Fourier phase estimation (FPE): $\ket{b}\ket{\phi(j)}\rightarrow \ket{b\oplus j}\ket{\phi(j)}$\footnote{$b\oplus j$ denotes the bit-wise addition modulo $2$ (XOR).}
\end{enumerate}
together with a Hadamard transform and $\mathrm{SWAP}$ gates. 
As illustrated in the circuit in~\cref{fig:circuit_uni}, we can put these elementary operations together to get the desired transformation on two $n$-qubit registers:
\begin{align} \label{eq_overview_eq}
    \ket{j}\ket{0}\,
    \underbrace{\overset{H}{\longrightarrow} \, \ket{j} \ket{\phi(0)} \,
    \overset{\textnormal{QFS}}{\longrightarrow} \,\ket{j}\ket{\phi(j)} \,
    \overset{\textnormal{FPE}}{\longrightarrow} \,\ket{0}\ket{\phi(j)} \,
    \overset{\textnormal{SWAP}}{\longrightarrow}}_{\overset{\textnormal{QFT}_{\textnormal{uni}}}{\longrightarrow}} \,\ket{\phi(j)} \ket{0} \, .
\end{align}
\begin{figure}[!htb]
\centering
\includegraphics[width=0.55\columnwidth]{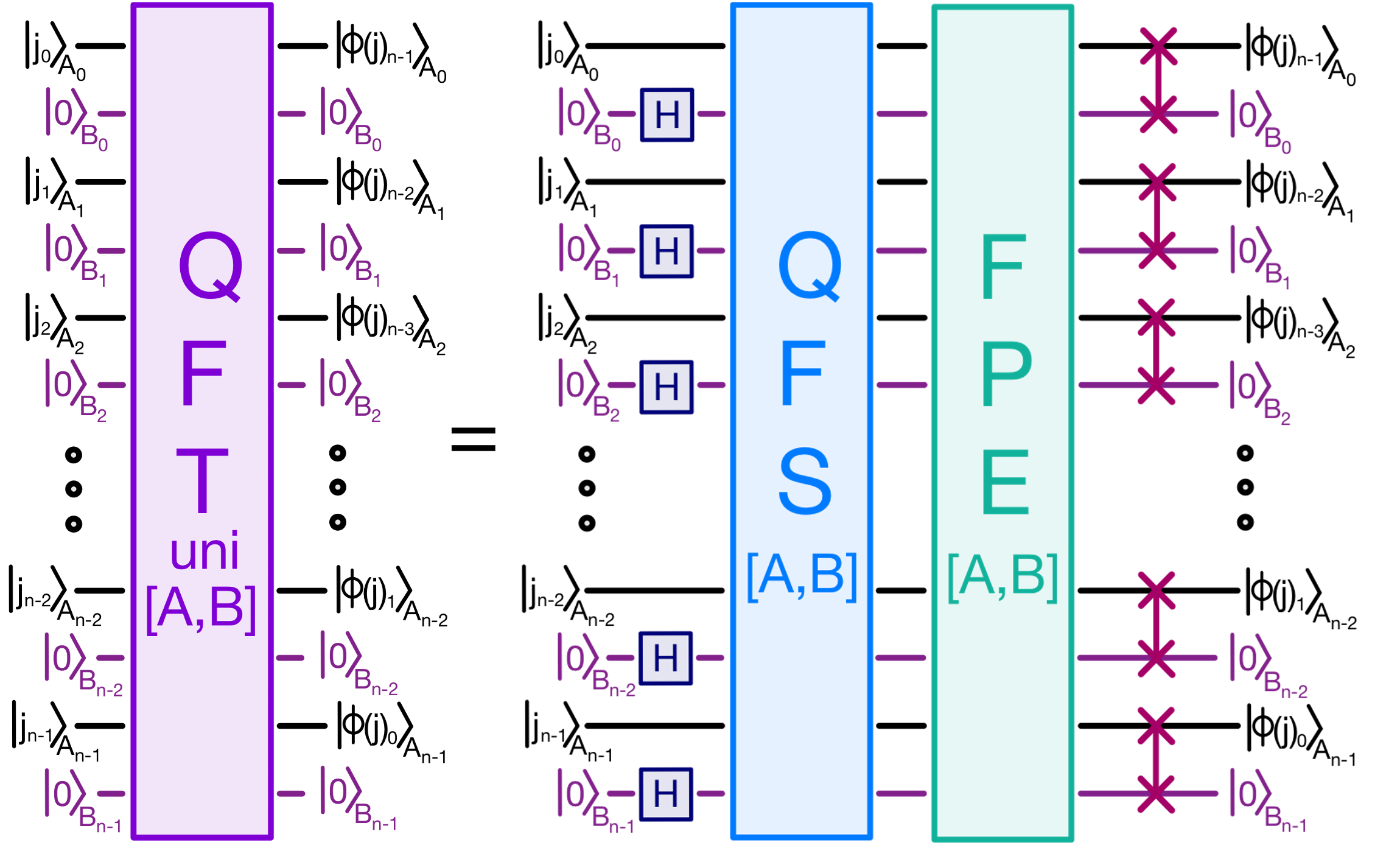}
\caption{Composite circuit for the QFT for uniformly distributed input states. Note that the two registers $A$ and $B$ are interleaved, with the output on $A$ reversed. The ideal QFT$_\mathrm{uni}$ equals the ideal QFT; the label ``uni" is only added to distinguish the two different constructions for the approximate case.} 
\label{fig:circuit_uni}
\end{figure}

However, as we see in~\cref{sec:uniform}, while the construction works for most inputs, a successful implementation is only guaranteed for inputs of the form defined in~\cref{eq_S_uni}. We can extend the result to arbitrary input states with a slightly more elaborate construction and by using an adder and showing how it can be implemented in logarithmic depth on a line using dynamic circuits.

\begin{theorem}[AQFT for arbitrary inputs] \label{thm_QFT_1D}
Let $n \in \N$ and \smash{$\frac{1}{\poly(n)}  \leq \eps < 1$}. A unitary $\mathrm{QFT}^{(\eps)}$ satisfying~\cref{eq:AQFT_def} can be implemented on a line of $4n$ qubits with nearest-neighbor connectivity with depth $\cO(\log \frac{n}{\eps^2})$.
\end{theorem}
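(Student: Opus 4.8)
The plan is to start from the construction of \cref{lem_QFT_1D} and to pinpoint the unique source of input-dependence in its error. In the decomposition \cref{eq_overview_eq}, the Hadamard layer and the QFS step deviate from their ideal action only by coherent phase errors coming from truncating the small controlled rotations to AQFT precision. Since these errors are diagonal in the computational basis of the control register $A$ — whose basis states $\ket{j}$ are mutually orthogonal — the deviation on a general input $\sum_j \alpha_j\ket{j}$ is governed by a per-$j$ bound and is therefore already input-independent. The obstacle is the FPE step, which clears the input register by the bitwise XOR $\ket{j}\mapsto\ket{j\oplus j}=\ket{0}$. The approximate phase read-out of $j$ from $\ket{\phi(j)}$ is correct only up to an additive error $\eta$ whose distribution is, to leading order, a fixed $j$-independent kernel centered at $0$; XOR-ing $j$ with the read value $j+\eta$ produces a residual $j\oplus(j+\eta)$ that depends on $j$, so the cleared register stays entangled with $j$ and hence with the Fourier register $B$. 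This $j$-dependent residual is exactly what forces the restriction to \smash{$\cS_{\mathrm{uni}}^{(p)}$}, where the orthogonality and bounded-weight conditions keep the residual branches from interfering badly across different $j$.

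The key idea is to replace this XOR uncomputation by an honest modular subtraction: realize a variant of FPE that maps $\ket{b}\ket{\phi(j)}\mapsto\ket{b-j}\ket{\phi(j)}$ using a genuine adder on the computational-basis registers, with the value $j$ supplied by an approximate phase read-out of $\ket{\phi(j)}$. The point is that, because the read-out noise is additive with a $j$-independent distribution, subtracting the read value $j+\eta$ from $A=\ket{j}$ leaves the residual $-\eta$, which depends only on the noise and not on $j$; the deviation of the cleared register from $\ket{0}$ is then the fixed, input-independent tail of the read-out kernel, in contrast to the $j$-dependent residual produced by XOR. Carrying this out coherently — reading $j$ into an auxiliary register, subtracting with the adder, and disposing of the auxiliary and carry registers — is what the more elaborate construction must arrange, and it is precisely these extra registers that raise the width from $2n$ to $4n$. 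A final shallow SWAP layer then outputs $\ket{\phi(j)}_A\ket{0}_B$ up to an error that, crucially, is now bounded uniformly over all of $\cS$, yielding an isometry satisfying \cref{eq:AQFT_def}.

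The technical heart, and the main obstacle, is implementing this adder in depth $\cO(\log n)$ on a line with nearest-neighbor connectivity. A carry-lookahead/prefix adder attains logarithmic depth, but its prefix tree is inherently non-local, and realizing the long-range carry propagation by SWAP networks on a bare line would cost linear depth. I would resolve this with dynamic circuits: route the carry and prefix information across the line by mid-circuit measurements together with classically controlled feed-forward Pauli corrections (teleportation-style), so that both the prefix computation and its inverse run in depth $\cO(\log n)$ on the line. Proving that this measure-and-correct routing faithfully realizes the intended \emph{exact} adder on the relevant inputs, within logarithmic depth and linear width, is the crux of the argument; this adder is the reusable primitive that also sharpens the construction of \cite{hales2002}.

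Finally I would assemble the depth and error budget. Choosing the AQFT truncation so that $\cO(\log\frac{n}{\eps^2})$ bits of each retained rotation survive makes the accumulated coherent error of the Fourier-basis operations, together with the read-out tail, at most $\eps$ in $\dist_{\cS}$, by subadditivity of the distance under composition; since the read-out inverse, the adder, and the SWAP layer are exact, they contribute nothing further, and the resulting bound holds uniformly over all of $\cS$ rather than only over \smash{$\cS_{\mathrm{uni}}^{(p)}$}. Each primitive — the Hadamard layer, QFS, the read-out, the adder, and the SWAP layer — runs in depth $\cO(\log\frac{n}{\eps^2})$ on the line, so their composition does as well, on a total of $4n$ qubits, which is the claimed statement.
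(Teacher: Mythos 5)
Your diagnosis is partly right --- the QFS truncation error is indeed uniform over $\cS$ (this is exactly \cref{lem_QFS}), and the FPE is the sole obstruction --- but the mechanism you propose to remove that obstruction fails, and it is not the paper's mechanism. Your central claim is that the FPE read-out noise is ``additive with a $j$-independent distribution,'' so that replacing the bitwise XOR by modular subtraction leaves a residual $\ket{-\eta}$ whose weight away from $\ket{0}$ is input-independent. This is false for the \emph{blockwise} phase estimation that makes the FPE logarithmic-depth in the first place. Each $2k$-qubit block estimates the top $k$ bits of a block of $j$ from a smeared point mass centered at $(\mathbf{j_{q_{2m+1}}},\mathbf{j_{q_{2m}}})+\delta_{j_{2m+1}}$, and the probability that those bits come out wrong scales like $1/|\mathbf{j_{q_{2m}}}|_{2^k}$: it depends on how close the lower half of the block is to a rounding boundary. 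This is precisely why the paper must introduce the bad set $B$ (those $j$ with $|\mathbf{j_{q_m}}|_{2^k}\leq 2^{k/2}$ for some $m$); for such $j$, e.g.\ a block whose lower half is all ones, the estimate of the top $k$ bits is wrong with constant probability. In that event register $A$ is left holding constant-weight garbage \emph{no matter which arithmetic} you use to uncompute: subtraction only changes the form of the garbage (a nonzero residual $\ket{-\eta}$ instead of $\ket{j\oplus(j+\eta)}$), not its amplitude, since errors in distinct blocks live at distinct bit positions and cannot cancel. Hence on the basis state $\ket{j}_A\ket{0}_B$ for a bad $j$ your construction still errs by $\Omega(1)$, and your final assertion that the error is ``bounded uniformly over all of $\cS$'' cannot be established --- the input-dependence sits in the read-out failure probability itself, not in the uncomputation step.

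The idea you are missing is randomization. The paper lifts \cref{lem_QFT_1D} to arbitrary inputs without touching the FPE: it conjugates $\mathrm{QFT}^{(\eps)}_{\mathrm{uni}}$ by a random shift and a random phase. Concretely, one draws classical $c_1,c_2\in\{0,\ldots,N-1\}$ uniformly, applies a classically controlled QFS (imprinting $\omega_N^{jc_1}$) and a classically controlled ADD ($\ket{j}\mapsto\ket{j+c_2}$), runs $\mathrm{QFT}^{(\eps)}_{\mathrm{uni}}$, and then undoes both with $\mathrm{QFS}^\dagger$ and $\mathrm{ADD}^\dagger$. The computation $\sum_{m'}\beta_{j,m'}\beta^\ast_{\ell,m'}=\delta_{j,\ell}/N$ shows that this randomization maps \emph{every} state of $\cS$ exactly into $\cS_{\mathrm{uni}}^{(1)}$ (the randomness register plays the role of additional environment), so no input can concentrate weight on the bad set; the error bound then reduces to the uniform-input case because the conjugating operations are exact, being classically controlled. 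Your dynamic-circuit, log-depth adder on a line is indeed an essential primitive and matches the paper's \cref{lem_add}, but it is deployed to add the classical $c_2$ before and after $\mathrm{QFT}^{(\eps)}_{\mathrm{uni}}$, not to replace the CNOTs inside the FPE. This placement is also what makes the accounting work: with a classical addend the adder needs $3n$ qubits plus $n$ more for teleporting long-range gates on the line, and since $B$ is in $\ket{0}$ when the adder runs its qubits are reused, giving $4n$ total; the classically controlled QFS collapses to constant depth, so the overall depth is $\cO(\log n)+\cO(\log\frac{n}{\eps^2})=\cO(\log\frac{n}{\eps^2})$.
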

The proof is given in~\cref{sec:general}. Again, our proof uses the ideas introduced in~\cite{hales2002}, maps the algorithm to a line of qubits using dynamic circuits and reduces its width and depth by constant factors.
The implementation of $\mathrm{QFT}^{(\eps)}$ uses a third operation:
\begin{enumerate}[(a)]
    \setcounter{enumi}{2}
    \item Quantum adder (ADD): $\ket{b}\ket{c} \rightarrow \ket{b+c}\ket{c}$ 
\end{enumerate}
together with the operations of $\mathrm{QFT}^{(\eps)}_{\mathrm{uni}}$. It can be seen that with the ADD operation above, we can also implement $\ket{\phi(j)}\ket{\phi(b+j)} \rightarrow \ket{\phi(j)}\ket{\phi(b)}$.
As illustrated in the circuit in~\cref{fig:circuit_gen}, we can put these operations together to get the desired transformation
\begin{align}
    \ket{j}_A\ket{0}_B\ket{c_1,c_2}_C
    \overset{\mathrm{ADD}_{A,C}  \mathrm{QFS}_{A,C}}&{\longrightarrow} 
    \omega_N^{jc_1}\ket{j+c_2}_A\ket{0}_B \ket{c_1,c_2}_{C}\\
    \overset{\mathrm{QFT_{uni,AB}}}&{\longrightarrow} \omega_N^{jc_1}\ket{\phi(j+c_2)}_A\ket{0}_B\ket{c_1,c_2}_{C} \\
    \overset{\mathrm{ADD}^{\dag}_{A,C} \mathrm{QFS}^{\dag}_{AC}}&{\longrightarrow} \ket{\phi(j)} \ket{0}_B \ket{c_1,c_2}_C \, , \label{eq:constr_generalinputs}
\end{align}
where register $C$ is used to enforce uniformity of the input, such that the second operation can be implemented through \smash{$\mathrm{QFT}^{(\eps)}_{\mathrm{uni}}$}. As described in more detail in~\cref{sec:general}, the register $C$ is classical and initialized with two randomly chosen numbers $c_1, c_2 \in \{0, \ldots, N{-}1\}$.
\begin{figure}[!htb]
\centering
\includegraphics[width=0.65\columnwidth]{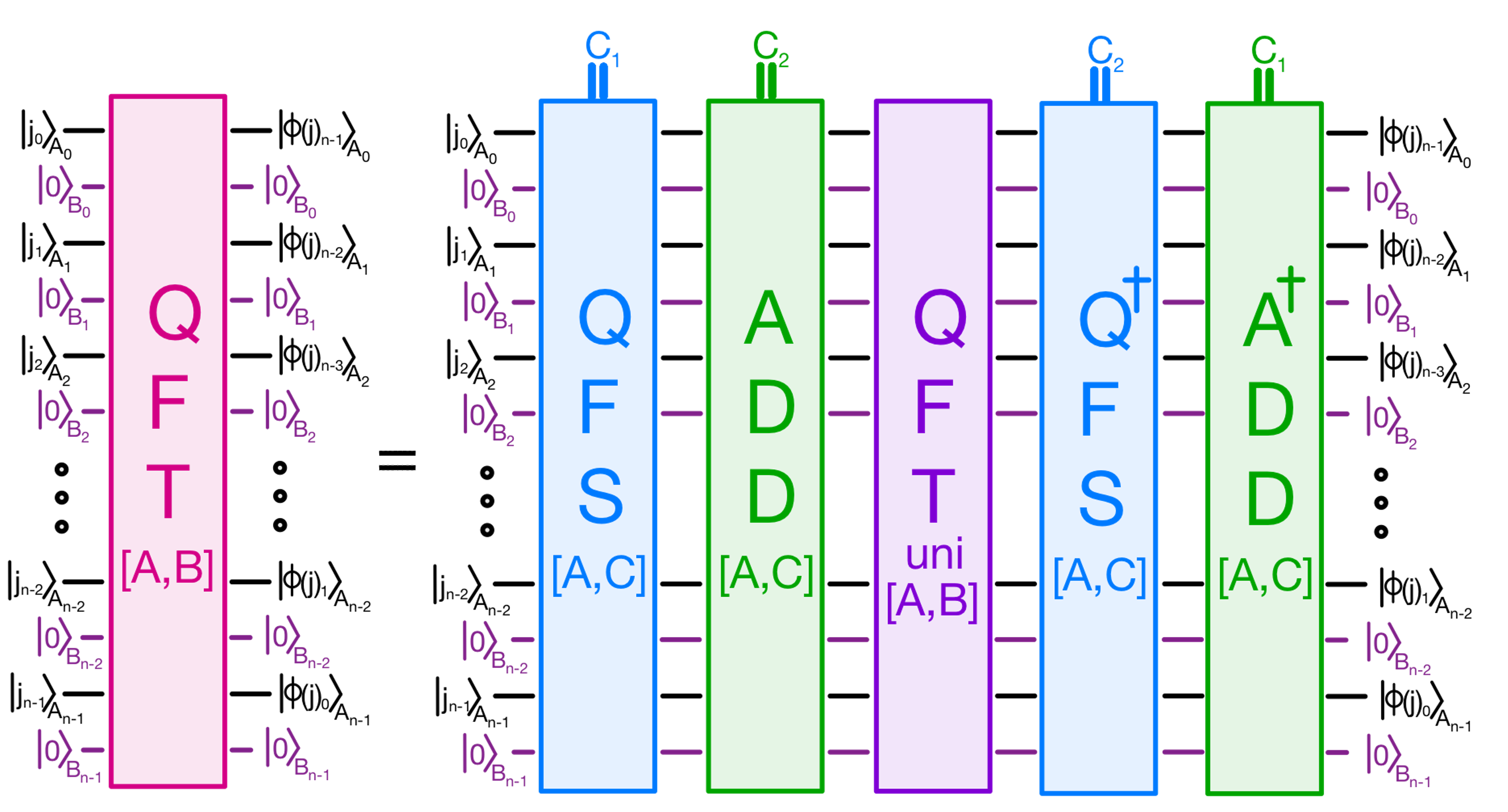}
\caption{Composite circuit for the QFT for arbitrary input states. Note that the two registers $A$ and $B$ are interleaved, with the output on $A$ reversed. The operations ADD and QFS$^\dag$ are controlled on a classical register $C$ and the adder requires two additional ancilla registers that are not shown in the figure. The ideal QFT$_\mathrm{uni}$ equals the ideal QFT; the label ``uni" is only added to distinguish the two different constructions for the approximate case.} 
\label{fig:circuit_gen}
\end{figure}

In the following, we go through (approximate) constructions of the operations QFS, FPE and ADD and show that they can be implemented on a line of qubits with corresponding depth given in~\cref{tab_QFT_depth} and width given in~\cref{tab_QFT_width}. Composing these operations as shown in~\cref{eq_overview_eq,eq:constr_generalinputs} still allows an implementation on a line and thus to prove the assertions of~\cref{lem_QFT_1D,thm_QFT_1D}.
An optimized implementation of the QFT, which reverses the order of the operations and uses classically conditioned gates for the uncomputation, as well as the implementation of QFT$^\dag$ is discussed in~\cref{app_optimized_implementation}. In~\cref{sec:shors} we discuss the implications for Shor's algorithm and some open questions.

\begin{table*}[!htb]
    \centering
    \begin{tabular}{c|ccc|cc}
        Connectivity & QFS$^{(\eps)}$ & FPE$^{(\eps)}$ & ADD & $\mathrm{QFT}^{(\eps)}_{\mathrm{uni}}$ & $\mathrm{QFT}^{(\eps)}$ \\  \hline
        All-to-all \cite{hales2002} &  $\cO(\log \frac{n}{\eps})$ & $\cO(\log \frac{n}{\eps^2})$ & $\cO(\log n)$ & $\cO(\log \frac{n}{\eps^2})$  & $\cO(\log \frac{n}{\eps^2})$  \\ 
       1D (w/o dyn.~circ.) [this work]  & $\cO(\log \frac{n}{\eps})$ & $\cO(\log \frac{n}{\eps^2})$ & $\cO(n)$ & $\cO(\log \frac{n}{\eps^2})$ & $\cO(n+\log \frac{n}{\eps^2})$ \\ 
       1D (w/ dyn.~circ.) [this work] & $\cO(\log \frac{n}{\eps})$ & $\cO(\log \frac{n}{\eps^2})$ & $\cO(\log n)$ & $\cO(\log \frac{n}{\eps^2})$  & $\cO(\log \frac{n}{\eps^2})$ \\ \hline
    \end{tabular}
    \caption{\textbf{Circuit depth} for different subroutines with error $\frac{1}{\poly(n)}  \leq \eps < 1$ comparing all-to-all and 1D connectivities. We consider \smash{$\mathrm{FPE}^{(\eps)}$} and \smash{$\mathrm{QFT}^{(\eps)}_{\mathrm{uni}}$} acting on states in \smash{$\cS_{\mathrm{uni}}^{(1)}$}.}
    \label{tab_QFT_depth}
\end{table*}

\begin{table*}[!htb]
    \centering
    \begin{tabular}{c|ccc|cc}
        Connectivity & QFS$^{(\eps)}$ & FPE$^{(\eps)}$ & ADD & $\mathrm{QFT}^{(\eps)}_{\mathrm{uni}}$ & $\mathrm{QFT}^{(\eps)}$ \\  \hline
        All-to-all \cite{hales2002}&  $2n$ & $4n$ & $4n$ & $6n$ & $8n$  \\ 
        All-to-all [this work] &  $2n$ & $2n$ & $4n$ & $2n$ & $3n$ \\ 
       1D (w/o dyn.~circ.) [this work] &  $2n$ & $2n$ & $4n$ & $2n$ & $3n$ \\ 
       1D (w/ dyn.~circ.) [this work] &  $2n$ & $2n$ & $5n$ & $2n$ & $4n$ \\  \hline
    \end{tabular}
    \caption{\textbf{Number of qubits} required for implementing the $\mathrm{QFT}^{(\eps)}$ on $2n$ qubits in logarithmic depth and with error \smash{$\frac{1}{\poly(n)} \leq \eps < 1$} for different connectivities. Note that while the general adder requires $4n$ qubits (and additional $n$ qubits for teleportation), in our implementation of $\mathrm{QFT}^{(\eps)}$ one of the numbers is classical and thus the total width decreases by $n$ qubits.}
    \label{tab_QFT_width}
\end{table*}

\section{Quantum Fourier transform on a line}
In this section we formally introduce the building blocks for the QFT, i.e., the QFS, FPE and ADD operations mentioned in~\cref{sec_intro}. In particular, we show how to implement approximate versions thereof in 1D circuits with nearest neighbor connectivity. To quantify the approximation error, we consider a specific distance measure that is introduced next.

\subsection{Preliminaries}\label{sec:notation}
Let $\St(\cH)$ denote the set of density  matrices on a Hilbert space $\cH$. Furthermore, for $L \in \C^{r \otimes s}$ we denote its trace norm by $\norm{L}_1:=\tr[\sqrt{L^\dagger L}]$ and its operator (or spectral) norm by $\norm{L}_{\infty}$, which is the largest singular value of $L$.
The diamond-norm distance between two unitary channels $\cU(\cdot):=U (\cdot) U^\dag$ and $\cV(\cdot):=V (\cdot) V^\dagger$ for two unitaries $U$ and $V$ on a Hilbert space $\cH_A$ can be written and bounded as 
\begin{align}
  \dist_{\diamond}(\cU,\cV)&:= \frac{1}{2} \norm{\cU-\cV}_\diamond \\ 
    &= \frac{1}{2} \max_{\rho_{AE}\in \St(\cH_{AE})}\norm{((\cU_A-\cV_A) \otimes \cI_E)(\rho_{AE})}_1 \\
    &\leq \min_{\phi \in [0,2\pi]} \norm{\ee^{\ci \phi} U - V}_{\infty} \\
    &\leq \norm{U - V}_{\infty} \\
    &= \norm{U\otimes \id_E - V\otimes \id_E }_{\infty} \\
    & = \max_{\ket{\psi}_{AE} \in \cH_{AE}} \norm{(U\otimes \id_E - V\otimes \id_E )\ket{\psi}_{AE}}_2 \\
    &=: \dist_{\cH_{AE}}(U,V) \, ,\label{eq_distances}
\end{align}
where $A\simeq E$, the second step uses~\cite[Proposition~1.6]{HKDT23}, and $\norm{\cdot}_2$ denotes the Euclidean norm. An alternative proof for the inequalities above can be found in~\cite[Lemma~12, Item~6]{AKN98}.
\Cref{eq_distances} ensures that if we can bound $\dist_{\cH_{AE}}(U,V)$, we can also bound the diamond-norm distance $\dist_{\diamond}(\cU,\cV)$.
Since for any two pure states $\ket{\omega},\ket{\tau}$ we have $\norm{\ket{\omega} - \ket{\tau}}_{2}
\leq \sqrt{2-2|\spr{\omega}{\tau}|}$ it follows that two unitaries $U$ and $V$ satisfying defined $\dist_{\cH_{AE}}(U,V) \leq \eps$ produce output states that are also close in terms of fidelity.
For a set $\cR \subseteq \cH_{AE}$ we define
\begin{align}
\dist_{\cR}(U,V):=\max_{\ket{\psi}_{AE} \in \cR} \norm{(U\otimes \id_E - V\otimes \id_E )\ket{\psi}_{AE}}_2 \, .
\end{align}

\subsection{Quantum Fourier state computation (QFS)}
\label{sec:qfs}
The exact QFS on $2n$ qubits can be implemented by the circuit depicted in~\cref{fig:qfs}, which has depth $\cO(n)$. 
It performs the operation
\begin{align} \label{eq_QFS}
   \ket{j}_A\ket{\phi(b)}_B  
    \overset{\mathrm{QFS}_{AB}}{\rightarrow}  \ket{j}_A \ket{\phi(b+j)}_B \,.
\end{align}
For $\eps >0$ the approximate QFS acting on two $n$-qubit registers $A$, $B$ is defined as a unitary $\mathrm{QFS}^{(\eps)}$, such that
\begin{align} 
\dist_{\cS}(\mathrm{QFS}_{AB},\mathrm{QFS}_{AB}^{(\eps)}) \leq  \eps \, , \label{eq:AQFS}
\end{align}
where $\cS$ is defined in~\cref{eq_set_S}. As shown in~\cref{app_pf_lem_QFS}, we can implement $\mathrm{QFS}_{AB}^{(\eps)}$ by neglecting small rotations, i.e.~all phase gates
\begin{align} \label{eq_def_Rk}
R_k:=\begin{pmatrix} 1 & 0\\ 0 & \ee^{2\pi\ci/2^k} \end{pmatrix}
\end{align}
for $k> k_{\max}=\cO(\log\frac{n}{\eps})$, see also~\cref{fig:qfs}.
\begin{figure}[!htb]
\centering
\includegraphics[width=1.\columnwidth]{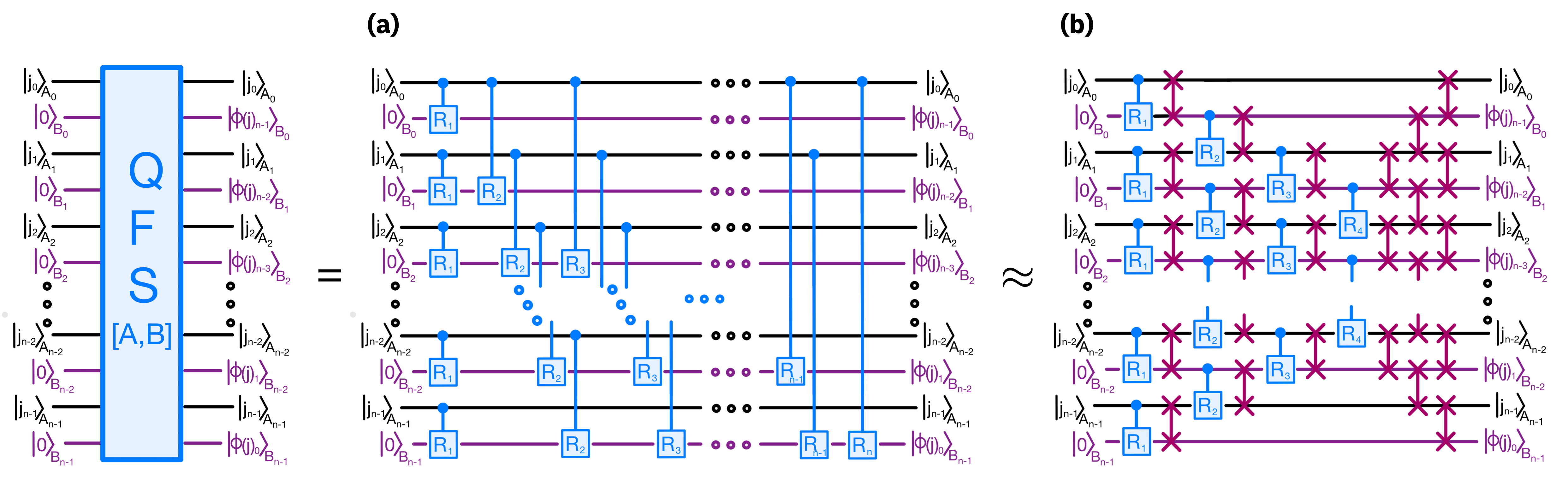}
\caption{Quantum circuit to implement the QFS defined in~\cref{eq_QFS} consisting of controlled phase gates with $R_k$ defined in~\cref{eq_def_Rk}: (a) exact QFS. (b) Approximate QFS on a 1D line, here plotted for $k_{\max}=4$.} 
\label{fig:qfs}
\end{figure}

\begin{lemma} \label{lem_QFS}
Let $n \in \mathbb{N}$ and \smash{$\frac{1}{\poly(n)}  \leq \eps < 1$}. A unitary $\mathrm{QFS}^{(\eps)}$ satisfying~\cref{eq:AQFS} can be implemented on a line of $2n$ qubits with nearest-neighbor connectivity with depth $\cO(\log \frac{n}{\eps})$.
\end{lemma}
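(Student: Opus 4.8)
The plan is to write the exact $\mathrm{QFS}_{AB}$ as a product of mutually commuting two-qubit gates, discard the negligible ones to obtain $\mathrm{QFS}^{(\eps)}$, and then exhibit a nearest-neighbor layout in which the surviving gates form a narrow band that parallelizes into depth $O(k_{\max})$. First I would record that $\mathrm{QFS}_{AB}$ is diagonal in the computational basis: since $\ket{\phi(b+j)}=\frac{1}{\sqrt N}\sum_k \omega_N^{(b+j)k}\ket{k}$, the map $\ket{j}_A\ket{\phi(b)}_B\mapsto\ket{j}_A\ket{\phi(b+j)}_B$ is exactly the unitary that multiplies $\ket{j}_A\ket{k}_B$ by $\omega_N^{jk}=\exp(2\pi\ci\,jk/2^n)$. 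Expanding $j=\sum_a j_a 2^a$ and $k=\sum_c k_c 2^c$ factorizes this phase as $\prod_{a,c}\exp(2\pi\ci\,j_a k_c/2^{n-a-c})$, i.e.\ into controlled-$R_{n-a-c}$ gates acting on the pair $(A_a,B_c)$, one for each pair with $n-a-c\ge 1$. The operation $\mathrm{QFS}^{(\eps)}$ keeps only the pairs with $n-a-c\le k_{\max}$, so the surviving gates are precisely those with $a+c\in[\,n-k_{\max},\,n-1\,]$. Since both unitaries are diagonal, $\norm{\mathrm{QFS}_{AB}-\mathrm{QFS}^{(\eps)}_{AB}}_\infty=\max_{j,k}\abs{1-\ee^{\ci\Delta\theta_{jk}}}\le\max_{j,k}\abs{\Delta\theta_{jk}}$, where $\Delta\theta_{jk}=2\pi\sum_{a+c<n-k_{\max}}j_a k_c\,2^{a+c-n}$ is the dropped phase; bounding each summand by $2^{a+c-n}$ and noting the sum is dominated by the boundary layer $a+c=n-k_{\max}-1$ gives $\abs{\Delta\theta_{jk}}=O(n\,2^{-k_{\max}})$, so $k_{\max}=\lceil\log_2(cn/\eps)\rceil=O(\log\tfrac n\eps)$ makes this at most $\eps$. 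By the inequality chain in \cref{eq_distances} this also bounds $\dist_{\cS}$, giving \cref{eq:AQFS}. (The detailed estimate is the content of \cref{app_pf_lem_QFS}.)

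The heart of the argument is the layout. I would place the $2n$ qubits on the line in interleaved order with the $B$-register indices reversed: put $A_a$ at position $2a$ and $B_c$ at position $2n-1-2c$. Then the pair $(A_a,B_c)$ of any surviving gate sits at line-distance $\abs{2(a+c-n)+1}\le 2k_{\max}-1$, because $a+c\in[n-k_{\max},n-1]$. Hence $\mathrm{QFS}^{(\eps)}$ becomes a \emph{banded} circuit of mutually commuting controlled-phase gates, each spanning at most $2k_{\max}-1$ sites. Note both truncation bounds are used here: $a+c\le n-1$ (nontriviality) and $a+c\ge n-k_{\max}$ (retention), and the reversal is essential, since the naive interleaving would leave gates spanning up to $\Theta(n)$ sites.

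To parallelize the band, I would tile the line by windows of width $4k_{\max}$ in two tilings offset from one another by $2k_{\max}$. Every gate, having span $<2k_{\max}$, lies entirely inside a single window of at least one tiling, so assign each gate to one such window. The windows of a fixed tiling are disjoint, and since all gates commute we may process the two tilings in sequence. Within each window I would realize all its gates by a linear SWAP network (odd–even transposition), which brings every pair of sites in a window of size $O(k_{\max})$ adjacent exactly once and then restores the original order, in depth $O(k_{\max})$. Two such rounds yield total depth $O(k_{\max})=O(\log\tfrac n\eps)$ on $2n$ qubits with nearest-neighbor connectivity, proving the claim.

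The main obstacle I expect is the routing step rather than the phase estimate. One must verify carefully that the reversed-interleaved layout confines \emph{all} surviving gates to a band of width $O(k_{\max})$, that the two offset tilings together cover every gate while each gate is applied exactly once, and that the windowed SWAP network returns all qubits to their starting positions so that the interleaved layout is preserved for composition with the neighboring subroutines in \cref{fig:circuit_uni}. Getting these bookkeeping details right—together with confirming that the SWAP overhead does not inflate the depth beyond $O(k_{\max})$—is where the care is needed.
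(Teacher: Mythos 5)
Your proposal is correct and follows essentially the same route as the paper's proof: truncate the controlled-$R_k$ phase gates at $k_{\max}=\cO(\log\frac{n}{\eps})$, bound the dropped diagonal phases in operator norm by $\cO(n\,2^{-k_{\max}})$, and interleave the two registers with $B$ reversed (the paper's ``meshing in opposite direction''), so that all surviving gates form a band of width $\cO(k_{\max})$ implementable with SWAP layers and controlled phases in depth $\cO(k_{\max})$. Your windowed odd--even SWAP-network bookkeeping is simply a more explicit rendering of the SWAP/phase-layer construction the paper presents via its figure.
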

The proof is given in~\cref{app_pf_lem_QFS}.

\subsection{Fourier phase estimation (FPE)}
\label{sec:fpe} 
The FPE, i.e., the mapping
\begin{align} \label{eq_FPE}
\ket{b}_A\ket{\phi(j)}_B \overset{\mathrm{FPE}_{AB}}{\longrightarrow} \ket{b\oplus j}_A\ket{\phi(j)}_B,
\end{align}
is the main contribution of~\cite{hales2002} that allows to reduce the width of the approximate QFT from $\cO(n\log n)$ qubits as described in~\cite{clevewatrous2000} to $\cO(n)$. The main idea, for the case $b=j$, is to estimate $\ket{j}$ by small, but exact quantum Fourier transforms that are applied in parallel on $2k$ qubits each, where $k=\cO(\log n)$. Here, we further improve the protocol presented in~\cite{hales2002} by removing two extra copies of $\ket{\phi(j)}_B$, which reduces the number of required qubits for the FPE by $2n$, and further simplifies the overall protocol of the approximate QFT. 
The circuit is illustrated in~\cref{fig:fpe}. The technical analysis is given in~\cref{app_pf_lem_FPE}.
\begin{figure}[!htb]
\centering
\includegraphics[width=0.8\columnwidth]{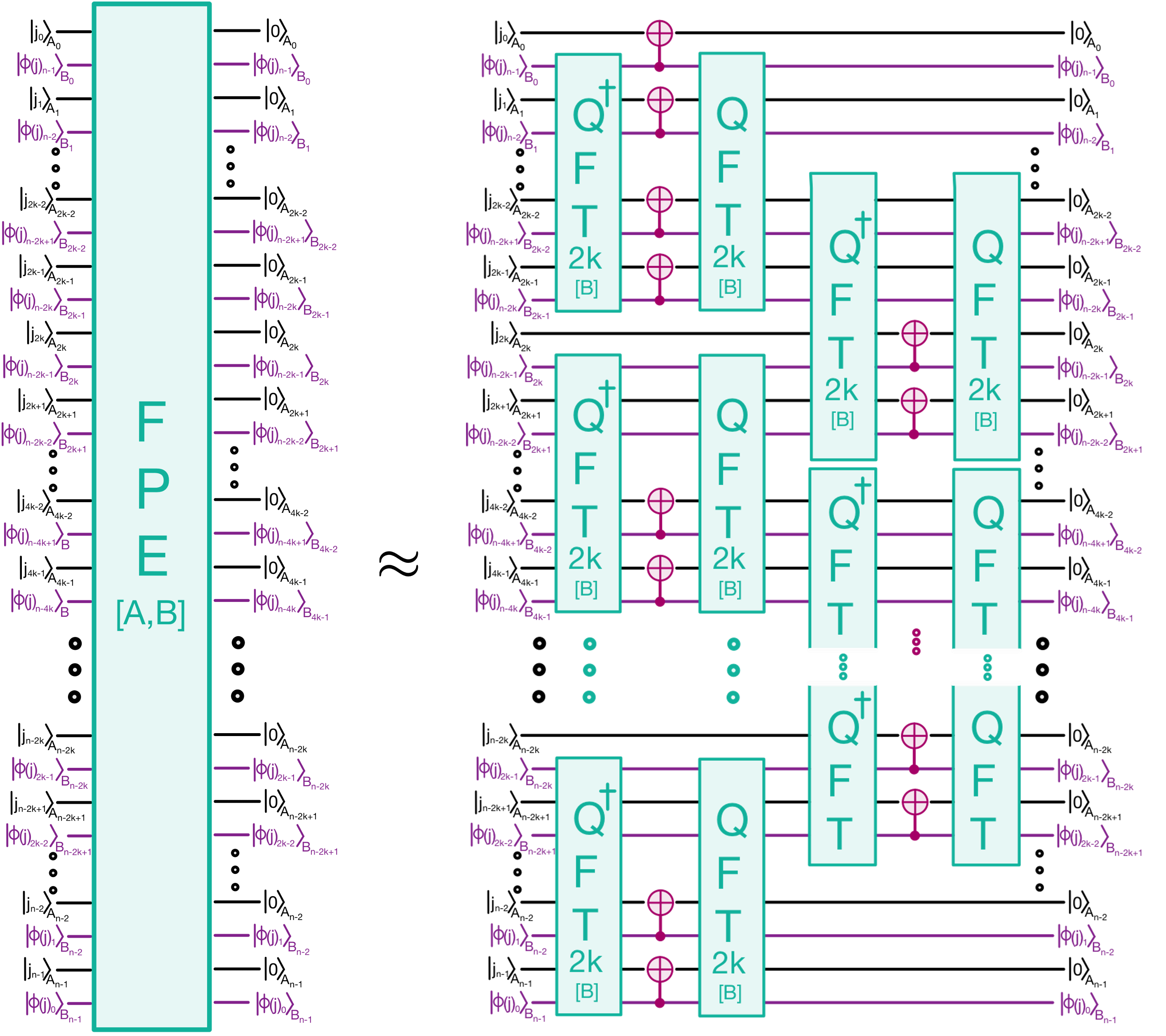}
\caption{Circuit for the approximate Fourier phase estimation defined in~\cref{eq_FPE,eq_AFPE} for $b=j$. The two registers are meshed.
} 
\label{fig:fpe}
\end{figure}

For $\eps>0$, the approximate FPE acting on two $n$-qubit registers $A$, $B$, is defined as a unitary $\mathrm{FPE}_{AB}^{(\eps)}$, such that
\begin{align} \label{eq_AFPE}
\dist_{\cT_{\mathrm{uni}}^{(p,q)}}(\mathrm{FPE}_{AB},\mathrm{FPE}^{(\eps)}_{AB}) \leq \eps \, ,
\end{align}
where
\begin{align} \label{eq_T_uni}
 \cT_{\mathrm{uni}}^{(p)}:=\Big\{\ket{\psi}_{ABE} \in \cH_{ABE}: 
 &\ket{\psi}_{ABE}=\sum_{j=0}^{N-1}\sum_{m=0}^{M-1} \beta_{j,m} \ket{j}_A \ket{\phi(j)}_B \ket{m}_{E} \ ,  \, |\sum_{m}\beta_{j,m}\beta_{\ell,m}^\ast|\leq\frac{p(n)}{N} \delta_{j,\ell} \, \forall j,\ell \Big\} \, ,
\end{align}
where $p(n)$ denotes a polynomials in $n$ with fixed degree independent of $n$ and with some environment $E$ with dimension $\dim (E) = M \geq 2^{2n}$.
\begin{lemma} \label{lem_FPE}
Let $n \in \mathbb{N}$ and \smash{$\frac{1}{\poly(n)}  \leq \eps < 1$}. 
A unitary $\mathrm{FPE}^{(\eps)}$ satisfying~\cref{eq_AFPE} can be implemented on a line of $2n$ qubits with nearest-neighbor connectivity with depth $\cO(\log \frac{n}{\eps^2})$.
\end{lemma}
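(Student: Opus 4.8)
The plan is to recover the integer $j$ from the Fourier state $\ket{\phi(j)}_B$ by a collection of small, \emph{exact} inverse quantum Fourier transforms run in parallel, to write the recovered bits into $A$ by XOR, and then to run the extraction in reverse so that $B$ is restored. The starting point is the product structure $\ket{\phi(j)}_B = \bigotimes_{l=0}^{n-1} \tfrac{1}{\sqrt2}\big(\ket 0 + \ee^{2\pi\ci j/2^{n-l}}\ket 1\big)$, so that the qubit at position $l$ carries the phase $j/2^{n-l}$ and hence encodes the low $(n-l)$ bits of $j$. An exact inverse QFT applied to a contiguous window of $2k$ qubits therefore performs phase estimation and reads off a window of consecutive bits of $j$; taking the window twice as long as the number of bits we keep provides the standard phase-estimation buffer that suppresses the readout error coming from the bits lying below the window.

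First I would fix $k = \cO(\log\frac{n}{\eps^2})$ and partition $B$ into windows of length $2k$ shifted by $k$, so that consecutive windows overlap in $k$ qubits. Applying the exact inverse QFTs in two staggered rounds (even-indexed windows, then odd-indexed ones) prevents two of them from acting on a shared qubit simultaneously and costs only a constant factor in depth. From each window I keep only the $k$ most significant output bits, which on the good branch equal the corresponding block of $j$; concatenating the $\cO(n/k)$ blocks reproduces $j$ exactly. These bits are XORed into $A$ (turning $\ket b$ into $\ket{b\oplus j}$, and in the correlated test case $b=j$ into $\ket 0$), after which the whole windowed extraction is undone so that $B$ returns to $\ket{\phi(j)}_B$. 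Because XOR is an involution and the extraction is unitary, on the error-free branch the net map is exactly the FPE of~\cref{eq_FPE}. This is also where the improvement over~\cite{hales2002} enters: the uncomputation is arranged to restore $B$ directly, without the two auxiliary copies of $\ket{\phi(j)}_B$, which removes $2n$ qubits and keeps the total width at $2n$.

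The error analysis then bounds $\dist_{\cT_{\mathrm{uni}}^{(p)}}(\mathrm{FPE}_{AB},\mathrm{FPE}^{(\eps)}_{AB})$. For a single window the inverse QFT of the contaminated Fourier state has overlap at least $1-\cO(2^{-k})$ with the correct $k$-bit readout, by the usual phase-estimation tail bound, so the windowed estimate of $j$ is correct up to an amplitude of order $2^{-k/2}$ per window. Summing the bad branches over the $\cO(n/k)$ windows, and using the uniformity constraint in~\cref{eq_T_uni} — which both forces the environment cross terms $\sum_m \beta_{j,m}\beta_{\ell,m}^\ast$ with $j\neq\ell$ to vanish and caps the weight $\sum_m |\beta_{j,m}|^2$ on each $j$ by $p(n)/N$, so that the tails add incoherently and none dominates — the total squared error is of order $(n/k)\,2^{-k}$, giving $\dist \le \cO(\sqrt{n\,2^{-k}})$. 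The choice $k=\cO(\log\frac{n}{\eps^2})$ makes this at most $\eps$, while each $2k$-qubit inverse QFT on a line segment has depth $\cO(k)$, so the full circuit runs in depth $\cO(k)=\cO(\log\frac{n}{\eps^2})$ on a line of $2n$ qubits.

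I expect the main obstacle to be the boundary bookkeeping between overlapping windows: a bit read out in one window depends on phases that spill in from its neighbour, and one must verify that the $k$-qubit overlap exactly absorbs this spillover so that the concatenated readout equals $j$ on the good branch and, crucially, that the reverse pass restores $B$ exactly using only the $2n$ qubits. Handling this coherently, rather than treating it as a single classical readout, is what forces the uniformity hypothesis~\cref{eq_T_uni}, and making that step rigorous is the technical heart of the proof.
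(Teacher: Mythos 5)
Your circuit construction matches the paper's: staggered $2k$-qubit windows of exact inverse QFTs, keeping the top $k$ bits of each window, XORing them into $A$ via CNOTs, and uncomputing so that $B$ is restored, all on $2n$ qubits in depth $\cO(k)$. The gap is in your error analysis, and it is not a technicality. You claim that for \emph{every} $j$ each window's readout has overlap $1-\cO(2^{-k})$ with the correct $k$-bit block of $j$, ``by the usual phase-estimation tail bound''. That bound only guarantees the estimate is within $\approx 2^{-k}$ of the true phase with high probability; it does \emph{not} guarantee that the top $k$ bits of the readout equal the top $k$ bits of $j$'s window. These two statements differ precisely when the phase sits near a readout-bin boundary, i.e., when the lower $k$ bits of the window, $\mathbf{j_{q_{2m}}}$, are close to $0$ modulo $2^k$ (a long run of $0$s or $1$s in $j$, so that a carry from the buffer flips the kept bits). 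The correct per-window error is $\eps_{j_{2m+1}} \lesssim 1/|\mathbf{j_{q_{2m}}}|_{2^{k}}$, which is \emph{constant} for such $j$, independent of $k$. Concretely, $j = 2^{n-1}-1$ and $j' = 2^{n-1}$ have completely different low-order bits, yet the reduced states of $\ket{\phi(j)}$ and $\ket{\phi(j')}$ on all but the most significant window agree up to exponentially small phase differences, so no window-local measurement can read those bits reliably; no choice of $k = \polylog(n)$ repairs this.

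This failure is exactly why the paper introduces the bad set $B=\{j : \exists\, m\geq 1,\ |\mathbf{j_{q_m}}|_{2^k}\leq 2^{k/2}\}$, bounds $|B| \leq nN/(k2^{k/2-1})$, and then uses the uniformity hypothesis in~\cref{eq_T_uni} in an essential way: the cap $\sum_m|\beta_{j,m}|^2\leq p(n)/N$ limits the input's total weight on $B$ to $np(n)/(k2^{k/2-1})$, while good $j$ contribute $\eps_j\leq n/(k2^{k/2})$, giving $\dist \leq \cO(\sqrt{np(n)/(k2^{k/2})})$ rather than your $\cO(\sqrt{n\,2^{-k}})$ (note the exponent $k/2$, not $k$). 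Your reading of uniformity --- that it merely makes the tails ``add incoherently'' --- misses this role entirely; indeed, if your uniform per-window bound were true, the lemma would hold for all inputs in $\cS$, and the entire machinery of $\cS_{\mathrm{uni}}$, the bad set, and the randomized ADD/QFS construction behind~\cref{thm_QFT_1D} would be superfluous. Your final parameter choice $k=\cO(\log\frac{n}{\eps^2})$ and the depth claim do survive (only the constant in front of the logarithm changes), but the analysis must be rebuilt around the bad-set argument. Relatedly, the obstacle you single out --- boundary bookkeeping between overlapping windows --- is not where the difficulty lies: within each round the windows are disjoint, the two shifted rounds handle complementary blocks of bits, and the genuinely hard part is that some blocks of some $j$ cannot be read locally at all.
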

The proof is given in~\cref{app_pf_lem_FPE}. While $\mathrm{FPE}^{(\eps)}$ is defined and guaranteed only for states $\ket{\psi}_{ABE} \in \cT_{\mathrm{uni}}^{(p)}$, we can see from the general construction in~\cref{sec:general} that it also works with a high probability for a random input state (see~\cref{remark:randomstate}). This probability is state-dependent though, and thus the general construction from~\cref{fig:circuit_gen} that effectively yields a state $\ket{\psi}_{ABE} \in \cT_{\mathrm{uni}}^{(p)}$ is required to guarantee a successful implementation of $\mathrm{FPE}^{(\eps)}$ for \textit{any} input state.

\subsection{Quantum adder (ADD)}
\label{sec:add}
Recall that the ADD operation is the mapping
\begin{align} \label{eq_ADD_def}
\ket{b}\ket{c} \overset{\textnormal{ADD}}{\rightarrow}  \ket{b+c}\ket{c} \, .
\end{align}
Noting the following two identities
\begin{align}
\ket{\phi(j)}\ket{\phi(\ell+j)}    
=\sum_{b, c =0}^{N-1} \omega_N^{j (b+c)+\ell c}  \ket{b}\ket{c}    
\end{align}
and
\begin{align}
\ket{\phi(j)}\ket{\phi(\ell)} 
    = \sum_{b, c=0}^{N-1} \omega_N^{j b+ \ell c} \ket{b}\ket{c} 
    = \sum_{b, c=0}^{N-1} \omega_N^{j (b+c)+\ell c} \ket{b+c}\ket{c}    \, , 
\end{align}
we can see that the ADD operation defined in~\cref{eq_ADD_def} is equivalent to the mapping
\begin{align}
    \ket{\phi(j)}\ket{\phi(\ell+j)} \rightarrow \ket{\phi(j)}\ket{\phi(\ell)}.
\end{align}

Further, recall that the ADD operation is only needed for the QFT on general (i.e.~non-uniform) inputs as explained in~\cref{sec_intro}.
\begin{figure}[!htb]
\centering
\includegraphics[width=0.95\columnwidth]{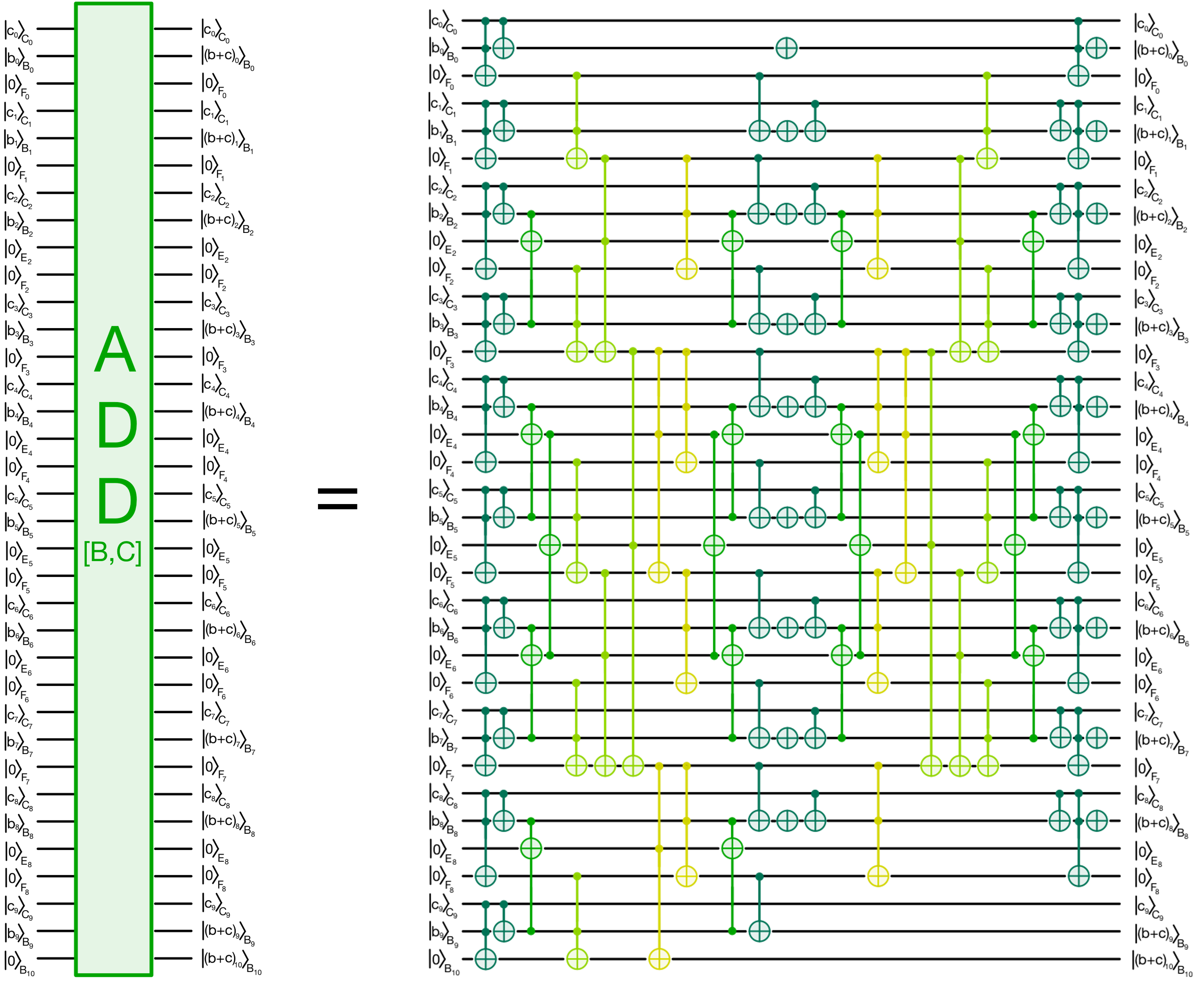}
\caption{In-place quantum carry-lookahead adder~\cite[Figure~5]{Draper06qcla}.
} 
\label{fig:qcla}
\end{figure}
\begin{lemma} \label{lem_add}
The ADD mapping defined in~\cref{eq_ADD_def} can be implemented on a line of $5n$ qubits with nearest-neighbor connectivity with depth $\cO(\log n)$.
\end{lemma}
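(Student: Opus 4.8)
The plan is to start from the in-place quantum carry-lookahead adder (QCLA) of~\cite{Draper06qcla} shown in~\cref{fig:qcla}, which already implements $\mathrm{ADD}$ in depth $\cO(\log n)$ on $\cO(n)$ qubits, and to turn it into a circuit that runs on a line with nearest-neighbor connectivity at the same asymptotic depth. The QCLA is the right object to begin with because its logarithmic depth comes entirely from a parallel-prefix carry tree: the gates group into $\cO(\log n)$ rounds, and within each round the constituent CNOT and Toffoli gates act on pairwise-disjoint sets of qubits. The only obstruction to a line implementation is geometric, namely that the carry tree couples a qubit at index $i$ to qubits at indices differing by $2^k$, so a naive embedding would need SWAP chains of length $\Omega(n)$ and collapse to depth $\cO(n)$ --- precisely the ``1D (w/o dyn.~circ.)'' entry of~\cref{tab_QFT_depth}.

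First I would make the round structure of the QCLA explicit and check that each round is a union of two- and three-qubit gates supported on disjoint index windows, the only nonlocal feature being the span of each individual gate. The central step is then to replace every long-range gate by a constant-depth gadget built from a dynamic circuit. For a long-range CNOT I would use the standard cat-state (1D cluster-state) gadget: prepare a chain of fresh ancilla qubits that, together with the control and target, forms a contiguous nearest-neighbor segment; entangle it with two layers of nearest-neighbor gates; measure the interior ancillas; and apply Pauli corrections conditioned on the measurement outcomes via feed-forward. This realizes the long-range CNOT in $\cO(1)$ depth, and a long-range fan-out analogously. A long-range Toffoli is obtained from the same primitive by using two constant-depth fan-outs to bring copies of both controls adjacent to the target, applying one local Toffoli, and uncomputing the copies, again in $\cO(1)$ depth.

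With these gadgets, the depth bookkeeping is immediate: each of the $\cO(\log n)$ rounds is executed in constant depth, for $\cO(\log n)$ total. For the width, I would interleave a lane of ancilla qubits with the $4n$ working qubits of the QCLA and route each round's long-range gates through fresh ancillas of this lane; because the gates of a round occupy disjoint windows, their ancilla chains can be chosen disjoint and run in parallel, and the lane can be reset and reused from round to round, so a pool of $n$ ancillas suffices, giving $5n$ qubits total, matching the ``1D (w/ dyn.~circ.)'' entries of~\cref{tab_QFT_depth,tab_QFT_width}.

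The hard part will be the layout-and-routing argument that ties these pieces together: I must exhibit a single linear ordering of the $5n$ qubits such that, in every round simultaneously, each long-range gate has a contiguous ancilla chain connecting its endpoints that (i) consists only of fresh ancilla qubits rather than data qubits lying between the endpoints, (ii) is disjoint from the chains of the other gates in that round, and (iii) fits within the $n$-qubit budget; and I must ensure the classically-controlled feed-forward corrections of one round do not serialize with the next. Verifying that the long-range Toffolis of the prefix tree can share the same ancilla pool as the CNOTs, and that the interleaving leaves every required gate with a clean contiguous chain, is the delicate point; the correctness and constant depth of each individual gadget is then routine.
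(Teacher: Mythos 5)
Your proposal matches the paper's proof: both start from the logarithmic-depth carry-lookahead adder of~\cite{Draper06qcla} (\cref{fig:qcla}), observe that the long-range CNOT/Toffoli gates within each of the $\cO(\log n)$ layers are non-overlapping, and replace each by a constant-depth dynamic-circuit (measure-and-feed-forward) gadget routed through an ancilla lane of $n$ qubits, giving $5n$ qubits and depth $\cO(\log n)$ on a line. The routing details you flag as the ``hard part'' are in fact treated at the same (brief) level of rigor in the paper itself, so there is no gap relative to the published argument.
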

\begin{proof}
The mapping~\cref{eq_ADD_def} can be implemented by a quantum addition circuit such as the logarithmic-depth quantum carry-lookahead adder~\cite{Draper06qcla}, see~\cref{fig:qcla}. While at first sight it seems that this adder requires all-to-all connectivity, we can exploit recent results that use dynamic circuits to implement long-range entangling gates in constant depth~\cite{baeumer2024fanout} and notice that in each of the $\cO(\log n)$ CNOT/Toffoli layers, the (long-range) gates are not overlapping and could therefore be implemented even on a 1D line in parallel. This allows to implement the desired operation
\begin{align}
    \ket{b}\ket{c} \rightarrow \ket{b+c}\ket{c}
\end{align}
with $\cO(n)$ ancilla qubits on a 1D line in depth $\cO(\log n)$. 
Note that while the depth stays logarithmic, the size of the circuit increases from $\cO(n)$ when considering all-to-all connectivity to $\cO(n\log n)$ for a 1D line.
\end{proof}

\section{Improved implementation of the QFT} \label{app_optimized_implementation}
For a more efficient implementation, we can leverage that the ancilla registers are (ideally) uncorrelated in the end together with the principle of deferred measurement. This allows to measure some of the ancilla registers during the uncomputation and apply classically conditioned gates. Similarly to the advantage we get when implementing only classically conditioned QFS and ADD operations in the general construction of $\mathrm{QFT}^{(\eps)}$, we can reduce the depth for $\mathrm{QFT}^{(\eps)}_{\mathrm{uni}}$, first considering the implementation presented above, and then showing a further optimized way by reversing the order of the operations.

\paragraph{Forward implementation.}
Noting that the FPE essentially just unentangles and resets the first register, allows with the principle of deferred measurement to measure (and reset) it already after the QFS in the $X$-basis (i.e.~applying a Hadamard gate before the measurement). The $n$ CNOT gates of the FPE operation can then be replaced by classically controlled Z-gates applied to the second register. As these gates are classically controlled, they are not even executed if the measured qubit is in the $\ket{0}$ state, so on average we can cut down half of those operations which allows to further simplify the FPE circuit (depending on the measurement outcomes). In addition, due to the early measurements there would be a smaller idling error on the first register. 
Another possibility would be to measure the first register after the FPE has been applied and post-select on the output state $\ket{0}_A$, thereby reducing the error of the final (post-selected) state.  

\paragraph{Backward implementation.}
Let us first note that $\mathrm{QFT}^\dag \ket{j}= \sum_k \omega_N^{-jk}\ket{k}  = \ket{\phi(-j)}$ and let us define an operation $\overline{\mathrm{FPE}}$ which performs
\begin{align}
    \ket{b}\ket{\phi(-j)} \overset{\overline{\textnormal{FPE}}}{\longrightarrow} \ket{b\oplus j}\ket{\phi(-j)}.
\end{align}
Note that we can implement $\overline{\mathrm{FPE}}^{(\eps)}$ similarly to $\mathrm{FPE}^{(\eps)}$, but with all QFTs in the quantum circuit in~\cref{fig:fpe} replaced by QFT$^\dag$ and vice-versa, such that $\dist_{\overline{\cT}_{\mathrm{uni}}}(\mathrm{FPE}_{AB},\mathrm{FPE}^{(\eps)}_{AB}) \leq \eps$, where $\overline{\cT}_{\mathrm{uni}}$ is a restricted set of states, such that any $\ket{\psi}_{ABE} \in \overline{\cT}_{\mathrm{uni}}$ is a uniformly distributed input state of the form $\ket{\psi}_{ABE}=\sum_{j=0}^{N-1}\sum_{m=0}^{M-1} \beta_{j,m} \ket{j}_A \ket{\phi(-j)}_B \ket{m}_{E}$, with $|\sum_{m} \beta_{j,m}\beta_{\ell,m}^\ast| \leq\frac{p(n)}{N} \delta_{j,\ell}$, where $p(n)$ denotes a polynomial in $n$ with fixed degree independent of $n$ and some environment $E$ with dimension $\dim (E) = M \geq 2^{2n}$.

We can now see that when replacing the QFS with QFS$^\dag$ and the FPE with $\overline{\mathrm{FPE}}$ in the implementation of $\mathrm{QFT}^{(\eps)}_{\mathrm{uni}}$, we get
\begin{align} \label{eq:backwardqft}
    \ket{j}\ket{0} \!
    \overset{\textnormal{H}}{\longrightarrow} \! \ket{j}\ket{\phi(0)} \!
    \overset{\textnormal{QFS}^\dag}{\longrightarrow} \! \ket{j}\ket{\phi(-j)}\!
    \overset{\overline{\textnormal{FPE}}}{\longrightarrow} \! \ket{0}\ket{\phi(-j)} 
    \overset{\textnormal{SWAP}}{\longrightarrow} \! \ket{\phi(-j)} \ket{0} \, ,
\end{align}
i.e.~we have implemented $(\mathrm{QFT}^{(\eps)}_{\mathrm{uni}})^{\dagger}$. This directly implies that if we implement the quantum circuit in the opposite direction with all gates transposed and complex conjugated, we can again implement $\mathrm{QFT}^{(\eps)}_{\mathrm{uni}}$, but this time with a reversed order of the operations. The benefit is that in this reversed order the final operation that essentially just unentangles and resets the ancilla register is the QFS. As in the forward implementation, we can use the principle of deferred measurement to measure and reset this register and apply classically conditioned gates. As the QFS consists solely of controlled phase gates, the total phase applied to each of the unmeasured qubits can be classically determined and applied in parallel. In particular, \textit{all} $\cO(n^2)$ controlled rotations can be taken into account, so the \textit{exact} QFS can be implemented in constant depth. While this does not change the scaling of the implementation of $\mathrm{QFT}^{\eps}_{\mathrm{uni}}$, it still drastically improves the depth, mainly by decreasing the depth of the QFS from logarithmic to constant. In addition, as the exact QFS would reduce the total error by a constant factor, alternatively one could keep the total error fixed and thus decrease the depth of $\mathrm{FPE}^{(\eps)}$ as well. 
This backward implementation works also for general inputs, as the general $\mathrm{QFT}^{(\eps)}$ construction can simply be reversed as well.

Another potential optimization would be to recursively replace the small QFT operations in the construction of the FPE (see~\cref{fig:fpe}) by the approximate QFT presented here. As the inputs are already by construction uniform, the simpler $\mathrm{QFT}^{(\eps)}_{\mathrm{uni}}$ could be applied here and thus the depth of the FPE reduced to $\cO(\log \log n)$. However, this requires another $n$ qubits and a larger prefactor, and one would need to carefully analyze the total error to determine in which regime of $n$ this trade-off could become beneficial.

\section{Implications for Shor's algorithm} \label{sec:shors}
The main ingredient of Shor's algorithm~\cite{shor_algo} to factor a number $M$ is the controlled modular exponentiation, which prepares the state 
\begin{align}
    \frac{1}{2^n}\sum_{x=0}^{2^{2n}-1}\ket{x}\ket{a^x \bmod M}=\frac{1}{2^n}\sum_{z=0}^{r-1}\sum_{m=0}^{\frac{2^{2n}}{r}-1}\ket{z+mr}\ket{a^z \bmod M} \, ,\label{eq:modexp}
\end{align} 
where we used that $a^x\bmod M$ is periodic with period $r$ and for simplicity assumed that $r$ divides $2^{2n}$. This is followed by a QFT on the first register, which in practice is often implemented on a single qubit via successive measurements and feed-forward operations though~\cite{mosca1999hiddensubgroupproblemeigenvalue,zalka1998}. But as the multiplication steps for the modular exponentiation are in most implementations performed in the Fourier basis~\cite{draper2000additionquantumcomputer}, typically another $2n$ QFT and QFT$^\dag$ applications are required. Here, the input states are of the form $\sum_x \alpha_x \ket{a^x \bmod M}$, i.e., they are not uniformly distributed according to~\cref{eq_S_uni}. However, since we can easily check if the result is correct, we can always try the simplified protocol with \smash{$\mathrm{QFT}^{(\eps)}_{\mathrm{uni}}$} first and if not successful implement the general \smash{$\mathrm{QFT}^{(\eps)}$}. Note, that slightly varying the block number $k$ for the FPE changes the set of bad states $B$ for a given input state and thus may also still allow a successful implementation of \smash{$\mathrm{QFT}^{(\eps)}_{\mathrm{uni}}$}.
We leave it for future research to rigorously understand the set of input states that guarantee a successful implementation of \smash{$\mathrm{QFT}^{(\eps)}_{\mathrm{uni}}$} and in particular to understand whether the input states in Shor's algorithm are of that form for any $a$ and $M$.

Using a logarithmic depth protocol for the QFT together with a sublinear depth multiplication operation, such as the Toom-Cook quantum multiplication circuit from~\cite{kahanamokumeyer2024}, leads to a sublinear depth implementation of the controlled modular multiplication step $\ket{c}\ket{b} \longrightarrow \ket{c}\ket{a^cb\mod M}$, which is the main ingredient in Shor's algorithm. This also implies a sub-quadratic depth implementation of Shor's using only $\cO(n)$ qubits.
Unfortunately, the recursive ``PhaseProduct" operation presented in~\cite{kahanamokumeyer2024} requires an adder applied to consecutive (not interleaved) registers of order $\cO(n)$, for which a logarithmic depth implementation is only known assuming all-to-all connectivity.
Thus, finding a sublinear multiplication circuit that can be implemented on a line of qubits remains an important open question, as it would immediately imply a sub-quadratic implementation of Shor's algorithm on a line.

\subsection*{Acknowledgements}
We thank John Watrous for valuable discussions and for bringing the work~\cite{hales2002} to our attention.

\appendix
\section{Proofs}
\subsection{Proof of~\cref{lem_QFS}} \label{app_pf_lem_QFS}
We want to emphasize that this proof follows closely the one from~\cite{hales2002}. We repeat it here for completeness and explain a few steps in more detail. In addition, we explain how to implement everything on a line, which is novel.
By omitting small rotations, i.e.~keeping only the  
\begin{align}
R_k:=\begin{pmatrix} 1 & 0\\ 0 & \ee^{2\pi\ci/2^k} \end{pmatrix}
\end{align}
for $k\leq k_{\max}$ with $k_{\max} = \log \frac{n}{\eps'}$ for some $\eps' \in [\frac{1}{\poly(n)},\frac{1}{4\pi})$ that we will specify later in the proof.
Note that the circuit given in~\cref{fig:qfs} has depth $\cO(k)=\cO(\log \frac{n}{\eps'})$ 
and performs the operation
\begin{align}
&\ket{j}\ket{\phi(b)}  \rightarrow \ket{j}\ket{\phi_{\eps'}(b+j)} \, , \label{eq_phi_eps}
\end{align}
where for $\xi_{jk}:=\sum_{\ell} \sum_{m=0}^{n-\ell-k_{\max}-1} 2^{\ell+m-n} j_{\ell} k_m$ and a unitary $U=\sum_{p,q} \ee^{-2\pi \ci \xi_{p q}} \ket{pq}\bra{pq}$
\begin{align}
\ket{j}\ket{\phi_{\eps'}(b+j)}
&:=\frac{1}{\sqrt{N}}\sum_{k=0}^{N-1} \omega_N^{bk+\sum_{\ell} \sum_{m=n-\ell-k_{\max}}^{n-\ell-1} 2^{\ell+m} j_{\ell} k_m} \ket{j}\ket{k}\\
&= \frac{1}{\sqrt{N}}\sum_{k=0}^{N-1} \ee^{2\pi \ci (\frac{(b+j)k}{N} - \sum_{\ell} \sum_{m=0}^{n-\ell-k_{\max}-1} 2^{\ell+m-n} j_{\ell} k_m )} \ket{j}\ket{k}\nonumber\\
&=\frac{1}{\sqrt{N}}\sum_{k=0}^{N-1} \ee^{2\pi \ci (\frac{(b+j) k}{N} - \xi_{jk})} \ket{j}\ket{k}\\
&=U\frac{1}{\sqrt{N}}\sum_{k=0}^{N-1} \omega_N^{(b+j) k}\ket{j}\ket{k}\\
&=U\ket{j}\ket{\phi(b+j)} \, ,
\end{align}
where we have used the binary representation of $j=\sum_\ell 2^\ell j_\ell$ and $k=\sum_m 2^m k_m$. We can estimate
\begin{align}
    |\xi_{jk}| &\leq \sum_{\ell=0}^{n-k_{\max}-1} \sum_{m=0}^{n-\ell-k_{\max}-1} 2^{\ell+m-n}\\
    &=2^{-k_{\max}}(n-k_{\max}-1)+2^{-n} \\
    &= \frac{\eps'}{n}(n-\log\frac{n}{\eps'}-1)+2^{-n} \\
    &\leq \eps' + 2^{-n} \\
    & \leq 2 \eps' \, , \label{eq_step_EB1}
\end{align} 
where in the final step we assume that $2^{-n} \leq \eps'$.\footnote{This can be enforced by choosing $n$ sufficiently large, since as we will see later $\eps'=\cO(\eps)$.}
We next observe that
\begin{align} \label{eq_phi_eps_close}
\| \ket{j}\ket{\phi(j)}-\ket{j}\ket{\phi_{\eps'}(j)}\|_2 
&\leq \norm{\id-U}_{\infty}\\
&=\max_{j,k}|1-\ee^{-2\pi\ci \xi_{jk}}| \\
&=\max_{j,k}|2\sin(\pi \xi_{jk})| \\
&\leq |2\sin(2\pi \eps')|\label{eq:2norm_bound}
\end{align}
where the second last step uses~\cref{eq_step_EB1}.

We next choose
\begin{align} \label{eq_choice_eps_bar}
\eps' = \frac{\arcsin(\eps/2)}{2\pi} = \frac{\eps}{4\pi}+\cO(\eps^3) \,.
\end{align}
Thus, when applying the QFS to two $n$-qubit registers $A$, $B$, of an arbitrary state 
\begin{align}
    \ket{\psi}_{ABE}=\sum_{j,b=0}^{N-1} \sum_{\ell=0}^{M-1}  \alpha_{j,b,\ell} \ket{j}_A \ket{\phi(b)}_B \ket{\ell}_E \, ,
\end{align} 
that might be entangled with some environment $E$ of dimension $\dim(E)=M\geq 2^{2n}$,
\begin{align}
    \dist_{\cH_{ABE}}(\mathrm{QFS},\mathrm{QFS}^{(\eps)}) 
    &= \max_{\ket{\psi}_{ABE} \in \cH_{ABE}}\norm{(\mathrm{QFS}_{AB} \otimes \id_E)\ket{\psi}_{ABE}-(\mathrm{QFS}^{(\eps)}_{AB} \otimes \id_E)\ket{\psi}_{ABE}}_2 \\
    &=\max_{\alpha_{j,b,\ell}: \sum_{j,b,\ell}|\alpha_{j,b,\ell}|^2=1}\norm{ \sum_{j,b,\ell} \alpha_{j,b,\ell} \ket{j}_A (\ket{\phi(b+j)}_B-\ket{\phi_\eps(b+j)}_B) \ket{\ell}_E }_2\\
    &\leq \sum_{j,b,\ell}|\alpha_{j,b,\ell}|^2 \max_j \norm{\ket{j}\ket{\phi(j)}-\ket{j}\ket{\phi_{\eps'}(j)} }_2\\
    &\leq \eps \, ,
\end{align}
where we have used~\cref{eq:2norm_bound,eq_choice_eps_bar} in the last step.
Note that since $\eps'$ has the same scaling as $\eps$ (as justified in~\cref{eq_choice_eps_bar}) the circuit has depth $\cO(\log \frac{n}{\eps})$.

It remains to justify that the circuit given in~\cref{fig:qfs} can be implemented on a 1D line with $2n$ qubits. This is possible by meshing the two registers in opposite direction as indicated in~\cref{fig:qfs} and implemented in depth $\cO(\log \frac{n}{\eps})$ by $\cO(\log \frac{n}{\eps})$ layers of SWAP gates followed by controlled phase rotations (see~\cref{fig:qfs}).
\qed


\subsection{Proof of~\cref{lem_FPE}} \label{app_pf_lem_FPE}
We want to emphasize that this proof follows closely the one from~\cite{hales2002}. We repeat it here for completeness and explain a few steps in more detail. In addition, we reduce the number of required qubits from $6n$ ($4n$ + $2n$ for the adder) to $2n$ for the case of uniformly distributed states (as defined in~\cref{eq_S_uni}), which also simplifies the overall protocol of the approximate QFT, and explain how to implement everything on a line, which is novel.
For simplicity we assume that $2k$ divides $n$ and for each copy divide the $n$ qubits into groups of $k$ qubits with indices $\mathbf{q_m}=(k(m+1)-1, ..., km)$ $\forall m \in \{0,...,n/k-1\}$. On the Fourier basis state $\ket{\phi(j)}$, we apply $\textnormal{QFT}^\dag$$\mod2^{2k}$ to each group of $2k$ qubits with indices $(\mathbf{q_{\frac{n}{k}-2m-1}, q_{\frac{n}{k}-2m-2}})$ $\forall m \in \{0,...,n/2k-1\}$, corresponding to the states
\begin{align}
    &\ket{\phi(j)}_(\mathbf{q_{\frac{n}{k}-2m-1}, q_{\frac{n}{k}-2m-2}}) = \frac{1}{2^k}\sum_{r=0}^{2^{2k}-1} \ee^{2\pi \ci /2^n (2^{k(\frac{n}{k}-2m-2)} r j)} \ket{r} \\
    \overset{\textnormal{QFT}^\dag_{2k}}{\longrightarrow}  &\frac{1}{2^{2k}} \sum_{x=0}^{2^{2k}-1}\sum_{r=0}^{2^{2k}-1}\ee^{\frac{2\pi\ci}{2^{2k}} r (2^{-2km}j-x)}\ket{x} = \sum_{x=0}^{2^{2k}-1}  \frac{1}{2^{2k}} \frac{|1-\ee^{2\pi\ci(2^{-2km}j-x)}|}{|1-\ee^{2\pi\ci(2^{-2km}j-x)/2^{2k}}|} \ket{x} =: \sum_{x=0}^{2^{2k}-1} \gamma_x^{j_{m}} \ket{x}\, , \label{eq_state_2_measure}
\end{align}
where we have used the geometric series.
If $j$ is a multiple of $2^{2km}$, i.e., $(j_{2km-1}...j_{0}) = (0...0)$, then the final state simply equals $\ket{j_{2k(m+1)-1}...j_{2km)}}=:\ket{(\mathbf{j_{q_{2m+1}}},\mathbf{j_{q_{2m}}})}$. Otherwise, we get a smeared pointmass centered at integers around the decimal $2^{-2km}j\mod 2^{2k} = j_{2k(m+1)-1}...j_{2km}.j_{2km-1}...j_{0}=:(\mathbf{j_{q_{2m+1}}},\mathbf{j_{q_{2m}}})+\delta_{j_{2m+1}}$, where we defined $\delta_{j_{2m+1}} \in (0,1)$ as the fractional part. We can bound
\begin{align}
    |\gamma_x^{j_{m}}| &\leq \frac{1}{2^{2k}|\sin (\frac{\pi}{2^{2k}}((\mathbf{j_{q_{2m+1}}},\mathbf{j_{q_{2m}}})+\delta_{j_{2m+1}}-x))|} \\
    &\leq \frac{2}{\pi} |(\mathbf{j_{q_{2m+1}}},\mathbf{j_{q_{2m}}})-x+\delta_{j_{2m+1}}|^{-1}_{2^{2k}} \\
    &\leq \frac{2}{\pi} (|(\mathbf{j_{q_{2m+1}}},\mathbf{j_{q_{2m}}})-x|_{2^{2k}}-1)^{-1}\, , \label{eq_amplitude}
\end{align}
where we have used $\sin(x)\leq x$ for $x\geq 0$, $\sin(x)\geq \frac{x}{2}$ for $x \in [0,1]$, and the notation
\begin{align}
    |x |_{N} = \left\{ \begin{array}{ll} x \mod N, & x\mod N  \leq N/2 \\
-x \mod N, & \, \textrm{else.} \\
\end{array}
\right.
\end{align}
Let us use this smeared distribution on $2k$ qubits to only estimate the first $k$ significant bits of $(\mathbf{j_{q_{2m+1}}},\mathbf{j_{q_{2m}}})$, i.e., $\mathbf{j_{q_{2m+1}}}$, as this allows to tolerate a distance of $|\mathbf{j_{q_{2m}}}|_{2^{k}}$. We can thus write
\begin{align}
    &\textnormal{QFT}^\dag_{2k} \ket{\phi(j)}_{(\mathbf{q_{\frac{n}{k}-2m-1},q_{\frac{n}{k}-2m-2}})} \nonumber \\
    &\hspace{20mm}= \sum_{x_0=0}^{2^k-1}\gamma_{(\mathbf{j_{q_{2m+1}}},x_0)}^{j_m} \ket{(\mathbf{j_{q_{2m+1}}},x_0)} + \sum_{\substack{x_1=0,\\x_1 \neq \mathbf{j_{q_{2m+1}}}}}^{2^k-1}\sum_{x_0}^{2^k-1}\gamma_{(x_1,x_0)}^{j_m}\ket{(x_1,x_0)}\\
    &\hspace{20mm}=: \sqrt{1-\eps_{j_{2m+1}}}\ket{\mathbf{j_{q_{2m+1}}}}\ket{a_{j_{2m+1}}} + \sqrt{\eps_{j_{2m+1}}}\ket{\perp_{2k}^{j_{2m+1}}} \, \label{eq:smallqftapplied},
\end{align}
where we have defined $\sqrt{1-\eps_{j_{2m+1}}}\ket{a_{j_{2m+1}}}:=\sum_{x_0=0}^{2^k-1}\gamma_{(\mathbf{j_{q_{2m+1}}},x_0)}^{j_m} \ket{x_0}$, $\ket{\perp_{2k}^{j_{2m+1}}}$ is a ``garbage" state on $2k$ qubits that is orthogonal to the rest, i.e. $\bra{\perp_{2k}^{j_{2m+1}}}(\ket{\mathbf{j_{q_{2m+1}}}}\ket{a_{j_{2m+1}}}) = 0$ and with
\begin{align}
    \eps_{j_{2m+1}} &:= \sum_{\substack{x_1=0,\\x_1 \neq \mathbf{j_{q_{2m+1}}}}}^{2^k-1}\sum_{x_0}^{2^k-1}|\gamma_{(x_1,x_0)}^{j_m}|^2\\
    \overset{\textnormal{\Cshref{eq_amplitude}}}&{\leq} 2 \sum_{x = |\mathbf{j_{q_{2m}}}|_{2^{k}}+1}^{2^{2k-1}} \frac{4}{\pi^2}(|x|_{2^{2k}}-1)^{-2}\\
    &\leq \frac{8}{\pi^2}\int_{|\mathbf{j_{q_{2m}}}|_{2^{k}}-1}^{2^{2k-1}}\frac{1}{x^2} \mathrm{d}x\\
    &\leq \frac{1}{|\mathbf{j_{q_{2m}}}|_{2^{k}}-1}\, ,
\end{align}
such that all states are normalized. 

When applied in parallel on all $n$ qubits, we get
\begin{align}
\left(\textnormal{QFT}^\dag_{2k}\right)^{\otimes \frac{n}{2k}}\ket{\phi(j)} &=\bigotimes_{m=0}^{\frac{n}{2k}-1}  \left( \sqrt{1-\eps_{j_{2m+1}}}\ket{\mathbf{j_{q_{2m+1}}}}\ket{a_{j_{2m+1}}}+\sqrt{\eps_{j_{2m+1}}}\ket{\perp_{2k}^{(j_{2m+1})}}\right) \nonumber\\
&=: \sqrt{1-\eps_{j_{\textnormal{odd}}}} \bigotimes_{m=0}^{\frac{n}{2k}-1} \ket{\mathbf{j_{q_{2m+1}}}}\ket{a_{j_{2m+1}}}+\sqrt{\eps_{j_{\textnormal{odd}}}}\ket{\perp_{n}^{(j_{\textnormal{odd}})}}, \label{eq:smallqfts}
\end{align}
where we have defined $\eps_{j_\textnormal{odd}}:=1-\prod_{m=0}^{\frac{n}{2k}-1}(1-\eps_{j_{2m+1}})$ and $\ket{\perp_{n}^{(j_{\textnormal{odd}})}}$ the garbage state on all $n$ qubits. Note that for the $\textnormal{QFT}^\dag \mod2^{2k}$ acting on $(\mathbf{q_{\frac{n}{k}-1},q_{\frac{n}{k}-2}})$, we directly get the state $\ket{(\mathbf{j_{q_{1}}},\mathbf{j_{q_{0}}})}=\ket{j_{2k-1}...j_{0}}$ as there are no fractional numbers left and we can without any error estimate the $2k$ least significant bits of $j$.

Next, we apply CNOT gates conditioned on the respective first $k$ qubits onto the first register $\ket{j}=\bigotimes_{m=0}^{\frac{n}{2k}-1} \ket{\mathbf{j_{q_{2m+1}}}} \ket{\mathbf{j_{q_{2m}}}}$, which results in
\begin{align}
\sqrt{1-\epsjodd} \bigotimes_{m=0}^{\frac{n}{2k}-1} \underbrace{\ket{\mathbf{j_{q_{2m+1}}} \oplus \mathbf{j_{q_{2m+1}}}} }_{=\ket{0}} \ket{\mathbf{j_{q_{2m}}}}\ket{\mathbf{j_{q_{2m+1}}}}\ket{a_{2m+1}}+\sqrt{\epsjodd}\ket{\perp_{2n}^{(0,j_{\textnormal{odd}})}} \, ,
\end{align}
where $\ket{\perp_{2n}^{(0,j_{\textnormal{odd}})}}$ is orthogonal to $\ket{0}$ on the first half of the first register and to $\ket{j_\textnormal{odd}}:=\bigotimes_{m=0}^{\frac{n}{2k}-1} \ket{\mathbf{j_{q_{2m+1}}}}$ on the first half of the second register.

\cref{eq:smallqfts} also implies that
\begin{align}
    \textnormal{QFT}_{2k}^{\otimes \frac{n}{2k}} \sqrt{1-\epsjodd} \bigotimes_{m=0}^{\frac{n}{2k}-1} \ket{\mathbf{j_{q_{2m+1}}}}\ket{a_{2m+1}} &= \ket{\phi(j)}-\sqrt{\epsjodd} \textnormal{QFT}^{\otimes \frac{n}{2k}} \ket{\perp_n^{(j_\textnormal{odd})}} \\
    &= (1-\epsjodd) \ket{\phi(j)} + \sqrt{\epsjodd(1-\epsjodd)}\ket{\perp_n^{(\phi(j))}} \, ,
\end{align}
where the last equation follows from normalization and $\ket{\perp_n^{\phi(j)}}$ is a garbage state orthogonal to $\ket{\phi(j)}$.
Thus, applying $\textnormal{QFT}_{2k}^{\otimes n/k}$ to the second register yields
\begin{align}
    \textnormal{FPE}_{1/2}^{(\eps)}\ket{j}\ket{\phi(j)} &=\bigotimes_{m=0}^{\frac{n}{2k}-1} \left( \!\ket{0} \ket{\mathbf{j_{q_{2m}}}} \right) ((1-\epsjodd)\ket{\phi(j)} + \sqrt{\epsjodd(1-\epsjodd)}\ket{\perp_n^{(\phi(j))}}) \nonumber \\
&\hspace{30mm}+\sqrt{\epsjodd} \textnormal{QFT}_{2k}^{\otimes n/k}\ket{\perp_{2n}^{(0,j_{\textnormal{odd}})}}\\
&=: \! (1\!-\!\epsjodd)\!\! \bigotimes_{m=0}^{\frac{n}{2k}-1}\!\! \left( \ket{0} \ket{\mathbf{j_{q_{2m}}}} \right) \ket{\phi(j)} \!+\! \sqrt{2\epsjodd\!\!-\!\epsjodd^2} \ket{{\perp}_{2n}^{(\textnormal{FPE}_{1/2},j)}}\,, \label{eq:firstsmallqft}
\end{align}
where we have defined $\textnormal{FPE}_{1/2}^{(\eps)}=\textnormal{QFT}_{2k}^{\otimes \frac{n}{2k}} \textnormal{CNOT}^{\otimes \frac{n}{2k}} (\textnormal{QFT}^\dag_{2k})^{\otimes \frac{n}{2k}}$ as the first half of the circuit implementing \smash{$\textnormal{FPE}^{(\eps)}$} and \smash{$\ket{{\perp}_{2n}^{(\textnormal{FPE}_{1/2},j)}}$} as a state orthogonal to the desired one. 

To estimate the remaining bits, we apply the same operations as before, but shifted by $k$ qubits. We start by applying $n/2k-1$ $\textnormal{QFT}^\dag$$\mod2^{2k}$ to each group of $2k$ qubits with indices $(\mathbf{q_{\frac{n}{k}-2m-2}, q_{\frac{n}{k}-2m-3}})$ $\forall m \in \{0,...,n/2k-2\}$ and as before estimate from that the first $k$ significant bits. 
Analogously to before, the state after applying $\textnormal{QFT}^\dag$$\mod2^{2k}$ shifted by $k$ qubits can be written as
\begin{align}
&(1-\epsjodd) \sqrt{1-\epsjeven} \bigotimes_{m=0}^{\frac{n}{2k}-2} \ket{\mathbf{j_{q_{2m+2}}}}\ket{a_{2m+2}}+(1-\epsjeven)\sqrt{\epsjeven}\ket{\perp_{n}^{(j_{\textnormal{even}})}} \nonumber \\
&\hspace{65mm}+\sqrt{2\epsjodd-\epsjodd^2}(\mathrm{QFT}_{2k}^\dag)^{\otimes \frac{n}{2k}} \ket{{\perp}_{2n}^{(\textnormal{FPE}_{1/2},j)}}\,, 
\end{align}
with $\epsjeven:=1-\prod_{m=0}^{\frac{n}{2k}-2}(1-\eps_{j_{2m+2}})$. 
Again, the first $k$ bits are copied into the corresponding bits of the first register via CNOT gates followed by application of $\textnormal{QFT}^\dag$$\mod2^{2k}$ to the second register. The final state is then given by
\begin{align}
    \mathrm{FPE}^{(\eps)}\! \left(\ket{j}\ket{\phi(j)}\right)&= (1-\eps_j)\ket{0}\ket{\phi(j)} + \sqrt{2\epsjodd-\epsjodd^2}\mathrm{QFT}_{2k}^{\otimes \frac{n}{2k}} \mathrm{CNOT} (\mathrm{QFT}^\dag_{2k})^{\otimes \frac{n}{2k}}  \ket{{\perp}_{2n}^{(\textnormal{FPE}_{1/2},j)}} \nonumber\\
    &\hspace{0mm}+\! (1\!-\!\epsjodd)\big(\! \sqrt{\epsjeven(1-\epsjeven)} \ket{0} \ket{{\perp'}^{\phi(j)}} \!+\!\sqrt{\epsjeven} \textnormal{QFT}_{2k}^{\otimes n/k}\ket{\perp_{2n}^{(0,j_{\textnormal{even}})}}\big) \label{eq:secondroundtrash} \\
    &=:(1-\eps_j)\ket{0}\ket{\phi(j)}+\sqrt{2\eps_j-\eps_j^2}\ket{\perp_{2n}^{(\mathrm{FPE},j)}}\, ,
    \label{eq:FPEepsperp}
\end{align}
with $\eps_j := 1-(1-\epsjodd)(1-\epsjeven)=1-\prod_{m=1}^{\frac{n}{k}-1}(1-\eps_{j_{m}})$, $\ket{\perp'_n{}^{\phi(j)}}$ another garbage state orthogonal to $\ket{\phi(j)}$ and $\ket{\perp_{2n}^{(\mathrm{FPE},j)}}$ a garbage state that is orthogonal to the desired state $\mathrm{FPE}(\ket{j}\ket{\phi(j)}) = \ket{0}\ket{\phi(j)}$.
  
The value $\eps_j$ is small for most $j$, however, for a small fraction of $j$ it is not. Let us define a set of bad values $j$, denoted $B$, by letting $j \in B$ if there exists an $m\geq 1$ such that $|\mathbf{j_{q_m}}|_{2^k} \leq 2^{k/2}$. Thus, $\forall j \notin B$, $\eps_{jm}\leq \frac{2}{(2^{k/2})}$ $\forall m$ and we can bound the error
\begin{align} \label{eq_bound_e_j}
    \eps_j = 1-\prod_{m=0}^{\frac{n}{k}-1}(1-\eps_{j_{m}}) \leq 1-(1-\frac{1}{2^{k/2}})^{\frac{n}{k}}  
    \leq \frac{n}{k2^{k/2}}\, .
\end{align}
For each $m$, the fraction of bad values in $\mathbf{j_{q_m}}$ is $2^{k/2+1}$ out of $2^k$. Thus, the fraction of $j \notin B$ is given by
\begin{align}
    \left(1-\frac{1}{2^{k/2-1}}\right)^{n/k} \geq 1-\frac{n}{k2^{k/2-1}}\, ,
\end{align}
and 
\begin{align} \label{eq_size_set_B}
|B|=\sum_{j\in B}1 \leq \frac{nN}{k2^{k/2-1}} \, .
\end{align}

Note, that the approximate FPE in~\cref{lem_FPE} is only defined for application on a restricted set $\cT_{\mathrm{uni}}^{(p)}$ of input states $\ket{\psi}_{ABE}$, such that any $\ket{\psi}_{ABE} \in \cT_{\mathrm{uni}}^{(p)}$ is of the form:
\begin{align}
    \ket{\psi}_{ABE} &=\sum_{j=0}^{N-1} \sum_{m=0}^{M-1} \beta_{j,m} \ket{j}_A \ket{\phi(j)}_B \ket{m}_{E}  \, , \label{eq:inputstate}
\end{align}
with $|\sum_{m}\beta_{j,m}\beta_{j,\ell} \leq \frac{p(n)}{N}\delta_{j,\ell}$ $\forall j,\ell$ and where $p(n)$ denotes a polynomial in $n$ with fixed degree independent of $n$, such that the input is fairly evenly distributed, and some environment $E$ with dimension $M\geq 2^{2n}$.
The uniform distribution implies that for any such input state $\ket{\psi}_{ABE}$ it holds that 
\begin{align} \label{eq_j_in_B}
\sum_{j\in B}\sum_{m}|\beta_{j,m}|^2 \leq \frac{n p(n)}{k 2^{k/2-1}} \, . 
\end{align}
Thus, we can bound the overall error 
\begin{align}
&\hspace{-5mm}\dist_{\cT_{\mathrm{uni}}^{(p)}}(\mathrm{FPE},\mathrm{FPE}^{(\eps)}) \nonumber \\ 
&=\max_{\ket{\psi}_{ABE}\in\cT_{\mathrm{uni}}^{(p)}}\norm{(\mathrm{FPE}_{AB} \otimes \id_E)\ket{\psi}_{ABE}-(\mathrm{FPE}^{(\eps)}_{AB} \otimes \id_E)\ket{\psi}_{ABE}}_2 \\
    &=\max_{\beta_{j,m}:|\sum_{m} \beta_{j,m}\beta_{\ell,m}^\ast| \leq \frac{p(n)}{N}\delta_{j,\ell}} \norm{ \sum_{j,m} \beta_{j,m} \left(\eps_j \ket{0}_A \ket{\phi(j)}_B -\sqrt{2\eps_j-\eps_j^2}\ket{\perp^{(\mathrm{FPE},j)}_{2n}}_{AB}\right)\ket{m}_E}_2 \\
    &\leq \max_{\beta_{j,m}:\sum_{m} |\beta_{j,m}|^2 \leq \frac{p(n)}{N}} \sqrt{\sum_{j,m} |\beta_{j,m}|^2 (\eps_j^2+2\eps_j-\eps_j^2) } \\
    &\leq \sqrt{ \frac{2p(n)}{N} \sum_{j} \eps_j } \\
   \overset{\textnormal{\Cshref{eq_bound_e_j}}}&{\leq} \sqrt{\frac{2p(n)}{N} \left(\sum_{j\notin B} \frac{n}{k2^{k/2}}+\sum_{j\in B} 1\right)}\\
    \overset{\textnormal{\Cshref{eq_size_set_B}}}&{\leq} \sqrt{\frac{6np(n)}{k2^{k/2}}} \, . \label{eq_to_bound}
\end{align}
For any \smash{$\frac{1}{\poly(n)} \leq \eps < 1$} we can choose $k = 2 \log (6n p(n)/\eps^2) = \cO(\log(np(n)/\eps^2))$ to bound the overall error occurring from the approximate FPE applied to the restricted set $\cT_{\mathrm{uni}}^{(p)}$, such that
\begin{align}
\dist_{\cT_{\mathrm{uni}}^{(p)}}\big(\mathrm{FPE},\mathrm{FPE}^{(\eps)}\big)
    \leq \eps.
\end{align}
As the QFT$\mod2^{2k}$ on $2k$ qubits can be exactly implemented in depth $\cO(k)$ using $\cO(k^2)$ gates, even on a 1D line, the full FPE operation consisting of $\cO(n/k)$ parallel QFTs can be implemented in depth $\cO(k)=\cO(\log \frac{np(n)}{\eps^2})$ and size $\cO(nk)=\cO(n \log \frac{np(n)}{\eps^2})$. Specifically, for $p(n)=1$ the depth is $\cO(\log \frac{n}{\eps^2})$. Note that the first $n$-qubit register with state $\ket{j}$ is meshed with the second $n$-qubit register with Fourier basis state $\ket{\phi(j)}$ in opposite direction as a result of the QFS (see also~\cref{fig:qfs}). Thus, the CNOT gates that erase state $\ket{j}$ on the first register are local gates that can be applied in parallel, as indicated in~\cref{fig:fpe}.

\qed

\subsection{Proof of~\cref{lem_QFT_1D}} \label{sec:uniform}
In this section we compose the operations QFS and FPE to prove the assertion of~\cref{lem_QFT_1D}.
Let us recall that $\forall j$
\begin{align} 
    \ket{j}_A\ket{0}_B\overset{\id_A \otimes H_B^{\otimes n}}{\longrightarrow}  \ket{j}_A\ket{\phi(0)}_B \overset{\mathrm{QFS}}{\longrightarrow} \ket{j}_A\ket{\phi(j)}_B \overset{\mathrm{FPE}}{\longrightarrow} \ket{0}_A\ket{\phi(j)}_B \overset{\mathrm{SWAP}}{\longrightarrow} \ket{\phi(j)}_A \ket{0}_B.
\end{align}
Thus, $\mathrm{QFT} = \mathrm{SWAP} \cdot  \mathrm{FPE}  \cdot \mathrm{QFS} \cdot (\id_A \otimes H^{\otimes n}_B)$.
Let $\eps' \in (0,1)$ and $\bar \eps \in (0,1)$ be two error parameters and define 
\begin{align}
\mathrm{QFT}_\mathrm{uni}^{(\eps',\bar \eps)} := 
\mathrm{SWAP} \cdot \mathrm{FPE}^{(\bar \eps)} \cdot \mathrm{QFS}^{(\eps')} \cdot (\id_A \otimes H^{\otimes n}_B) \, ,
\end{align} 
for operations acting on $\cS_{\mathrm{uni}}^{(p)}$ as defined above in~\cref{eq_S_uni}.
Then 
\begin{align}
    &\hspace{-8mm}\dist_{\cS_{\mathrm{uni}}^{(p)}}(\mathrm{QFT},\mathrm{QFT}_{\mathrm{uni}}^{(\eps',\bar \eps)}) \nonumber \\
    &= \dist_{\cS_{\mathrm{uni}}^{(p)}}\big(\mathrm{SWAP} \cdot \mathrm{FPE} \cdot \mathrm{QFS} \cdot (\id_A \otimes H^{\otimes n}_B),\mathrm{SWAP} \cdot \mathrm{FPE}^{(\bar \eps)} \cdot \mathrm{QFS}^{(\eps')} \cdot (\id_A \otimes H^{\otimes n}_B) \big) \\
    &= \dist_{\cS_{\mathrm{uni}}^{(p)}}\big( \mathrm{FPE} \cdot \mathrm{QFS},\mathrm{FPE}^{(\bar \eps)} \cdot \mathrm{QFS}^{(\eps')}  \big) \\    
    &= \dist_{\cS_{\mathrm{uni}}^{(p)}}\big( \mathrm{FPE} \cdot \mathrm{QFS},\mathrm{FPE}^{(\bar \eps)} \cdot \mathrm{QFS}^{(\eps')}  \big) \\  
    \overset{\textnormal{triangle ineq.}}&{\leq} \dist_{\cS_{\mathrm{uni}}^{(p)}}\big( \mathrm{FPE} \cdot \mathrm{QFS},\mathrm{FPE}^{(\bar \eps)} \cdot \mathrm{QFS}  \big) + \dist_{\cS_{\mathrm{uni}}^{(p)}}\big(\mathrm{FPE}^{(\bar \eps)} \cdot \mathrm{QFS},\mathrm{FPE}^{(\bar \eps)} \cdot \mathrm{QFS}^{(\eps')}  \big)\\
    &\leq\dist_{\cT_{\mathrm{uni}}^{(p)}}\big( \mathrm{FPE},\mathrm{FPE}^{(\bar \eps)}\big) + \dist_{\cS_{\mathrm{uni}}^{(p)}}\big(\mathrm{QFS},\mathrm{QFS}^{(\eps')}  \big)\\
    \overset{\textnormal{\Cshref{lem_QFS,lem_FPE}}}&{\leq} \bar\eps +\eps' \, ,
\end{align}
where we used in the penultimate step that the QFS applied to input states $\ket{\psi}_{ABE}\in \cS_{\mathrm{uni}}^{(p)}$ yields states $\ket{\psi'}_{ABE}\in \cT_{\mathrm{uni}}^{(p)}$. The operation \smash{$\mathrm{QFT}_\mathrm{uni}^{(\eps)}$} can be implemented according to~\cref{lem_QFS,lem_FPE} in 1D with depth $\cO(\log \frac{n}{\eps'})+\cO(\log \frac{np(n)}{\bar\eps^2})$.
Choosing the error terms as $\eps'= \bar \eps  = \frac{\eps}{2}$ yields
\begin{align}
\dist_{\cS_{\mathrm{uni}}^{(p)}}\big(\mathrm{QFT},\mathrm{QFT}_{\mathrm{uni}}^{(\frac{\eps}{2},\frac{\eps}{2})}\big) \leq \eps \, ,
\end{align}
and can be implemented in depth $\cO(\log\frac{np(n)}{\eps^2})$ for input states in $\cS_{\mathrm{uni}}^{(p)}$ and specifically in depth $\cO(\log\frac{n}{\eps^2})$ for input states in $\cS_{\mathrm{uni}}^{(1)}$.
\qed

\subsection{Proof of~\cref{thm_QFT_1D}} \label{sec:general}
As we have shown in~\cref{sec:uniform} that we can implement the unitary $\mathrm{QFT}^{(\eps)}_{\mathrm{uni}}$ which approximately performs the QFT on uniform input states with an error $\eps$. Here, we show how this result can be lifted to work for arbitrary input states by constructing a circuit that first enforces the uniformity, then applies the $\mathrm{QFT}^{(\eps)}_{\mathrm{uni}}$, and finally decodes such that the correct output state is reached.
To see how this construction (also indicated in~\cref{fig:circuit_gen}) works, consider
\begin{align}
    \ket{j}_A\ket{0}_B\ket{c_1}_{C_1}\ket{c_2}_{C_2} 
    &\overset{\mathrm{QFS}_{A,C_1}}{\longrightarrow} \omega_N^{jc_1}\ket{j}_A\ket{0}_B\ket{c_1}_{C_1}\ket{c_2}_{C_2}\\
    &\overset{\mathrm{ADD}_{A,C_2}}{\longrightarrow} \omega_N^{jc_1}\ket{j+c_2}_A\ket{0}_B\ket{c_1}_{C_1}\ket{c_2}_{C_2}\\
    &\overset{\mathrm{QFT_{A,B}}}{\longrightarrow} \omega_N^{jc_1}\ket{\phi(j+c_2)}_A\ket{0}_B\ket{c_1}_{C_1}\ket{c_2}_{C_2}\\
    &\overset{\mathrm{QFS}^\dag_{A,C_2}}{\longrightarrow} \omega_N^{jc_1}\ket{\phi(j)}_A\ket{0}_B\ket{c_1}_{C_1}\ket{c_2}_{C_2} \\
    &\overset{\mathrm{ADD}^{\dag}_{A,C_1}}{\longrightarrow}
    \ket{\phi(j)}_A\ket{0}_B\ket{c_1}_{C_1}\ket{c_2}_{C_2} \,. \label{eq:constr_generalinputs2}
\end{align}
As this holds $\forall \ket{j}_A$ and $\forall \ket{c}_C:=\ket{c_1}_{C_1}\otimes \ket{c_2}_{C_2}$, by linearity it also holds for an arbitrary superposition state $\sum_{j,c}\beta_{j,c}\ket{j}\ket{c}$.
Ensuring that \smash{$\sum_{c}|\beta_{j,c}|^2=\frac{1}{N}$} allows to apply \smash{$\mathrm{QFT}^{(\eps)}_{\mathrm{uni}}$} in the second step.

Let us consider an arbitrary input state $\ket{\psi}_{ABE}\in\cS$ and an ancillary equal superposition state \smash{$\ket{c}_C:=\frac{1}{N}\sum_{c_1,c_2=0}^{N-1}\ket{c_1}_{C_1}\ket{c_2}_{C_2}$} on $2n$ qubits. Then, we have
\begin{align}
    \ket{\psi}_{ABE}\ket{c}_C 
    &= \frac{1}{N} \sum_{j,m,c} \alpha_{j,m} \ket{j}_{A} \ket{0}_B \ket{m}_{E}  \ket{c}_{C} \\
    &\overset{\mathrm{ADD}_{A,C_2}\cdot \, \mathrm{QFS}_{A,C_1} }{\longrightarrow}  \frac{1}{N} \sum_{j,m, c} \alpha_{j,m} \omega_N^{jc_1}\ket{j+c_2}_{A} \ket{0}_B \ket{m}_{E}  \ket{c}_{C} \\
    &= \frac{1}{N} \sum_{j,m, c} \alpha_{j-c_2,m} \omega_N^{(j-c_2)c_1} \ket{j}_{A} \ket{0}_B \ket{m}_{E}  \ket{c}_{C} \\
    & =\sum_{j,m'} \beta_{j,m'} \ket{j}_A\ket{0}_B\ket{m'}_{E'}\\
    &=: \ket{\psi'}_{ABE'}\, ,
\end{align}
with $\ket{m'}_{E'} :=\ket{m}_E\ket{c}_C$ and $\beta_{j,m'}:=\frac{\alpha_{j-c_2,m}}{N}\omega_N^{(j-c_2)c_1}$, such that 
\begin{align}
    \sum_{m'}\beta_{j,m'}\beta_{\ell,m'}^\ast = \frac{1}{N^2}\sum_{m,c_1}  \alpha_{j-c_2,m}\alpha_{\ell-c_2,m}^\ast \underbrace{\sum_{c_2} \omega_N^{(j-\ell)c_2}}_{\delta_{j,\ell}N}  = \frac{\delta_{j,\ell}}{N}\sum_{m,c_1} |\alpha_{c_1,m}|^2 = \frac{\delta_{j,\ell}}{N} \,.
\end{align}
Thus, for any input state $\ket{\psi}_{ABE}\in\cS$, the construction yields a state $\ket{\psi'}_{ABE'}\in\cS_{\mathrm{uni}}^{(1)}$ which allows the application of $\mathrm{QFT}^{(\eps)}_{\mathrm{uni}}$. Following the construction described in~\cref{eq:constr_generalinputs2} then yields for arbitrary states
\begin{align}
    \ket{\psi}_{ABE}\ket{c}_C \overset{\mathrm{ADD}_{A,C_2}\cdot \, \mathrm{QFS}_{A,C_1}}&{\longrightarrow}  \sum_{j,m'} \beta_{j,m'} \ket{j}_A\ket{0}_B\ket{m'}_{E'} \\
    \overset{\mathrm{QFT}_{AB}}&{\longrightarrow} \! \sum_{j,m'}\! \beta_{j,m'}\! \ket{\phi(j)}_A\ket{0}_B\ket{m'}_{E'} \!=\! \frac{1}{N}\! \sum_{j,m, c}\! \alpha_{j,m} \omega^{jc_1} \ket{\phi(j+c_2)}_{A} \ket{0}_B \ket{m}_{E} \ket{c}_{C} \\
    \overset{\mathrm{ADD}^\dag_{A,C_1}\cdot \, \mathrm{QFS}^\dag_{A,C_2}}
    &{\longrightarrow}  \frac{1}{N} \sum_{j,m, c} \alpha_{j,m} \ket{\phi(j)}_{A} \ket{0}_B \ket{m}_{E}  \ket{c}_{C} = \mathrm{QFT}_{AB}\ket{\psi}_{ABE} \ket{c}_{C} \, .
\end{align}
Note, that the ancilla register $C$ is uncorrelated in the end, which allows to be measured, and that all operations that involve register $C$ are only controlled on it. Thus, we can use the principle of deferred measurement to measure already in the very beginning and apply only classically conditioned gates. This, on the other hand, can be replaced by just classically drawing two random numbers $c_1$, $c_2$, adding them to the initial state and phase via classically controlled ADD and QFS operations, and later subtracting the corresponding states and phases again via the classically controlled QFS$^\dag$ and ADD$^\dag$. This procedure yields the exact same statistics as using an equally distributed state $\ket{c}_{C}$ and allows to implement the classically controlled $\mathrm{QFS}$ and $\mathrm{QFS}^\dag$ without any approximation error, as it simply corresponds to classically controlled single qubit phase rotations. It further substantially reduces the depth as compared to the regular QFS and ADD operations on $2n$ qubits, as all two-qubit gates controlled on $C$ can be replaced by classically controlled single-qubit rotations, which have no connectivity constraints anymore.

For $\mathrm{QFT}^{(\eps)}_{AB}=\mathrm{ADD}_{A,C_1}^\dag  \mathrm{QFS}_{A,C_2}^\dag \cdot \mathrm{QFT}^{(\eps)}_{\mathrm{uni}} \cdot \mathrm{ADD}_{A,C_2} \mathrm{QFS}_{A,C_1}$ we can now bound
\begin{align}
    \dist_{\cS}(\mathrm{QFT}_{AB},\mathrm{QFT}_{AB}^{(\eps)}) 
    &=\max_{\ket{\psi}_{ABE}\in \cS}\norm{\big((\mathrm{QFT}_{AB}-\mathrm{QFT}_{AB}^{(\eps)}) \otimes \id_E\big)\ket{\psi}_{ABE}}_2 \\
    &=\max_{\ket{\psi}_{ABE}\in \cS}\!\norm{ \Big( \mathrm{ADD}^\dag  \mathrm{QFS}^\dag (\mathrm{QFT}\!-\!\mathrm{QFT}_{\mathrm{uni}}^{(\eps)}) \mathrm{ADD} \, \mathrm{QFS} \big) \otimes \id_E\big)\ket{\psi}_{ABE}}_2 \\
    &\leq \max_{\ket{\psi'}_{ABE'}\in \cS_{\mathrm{uni}}^{(1)}} \norm{ (\mathrm{QFT}-\mathrm{QFT}_{\mathrm{uni}}^{(\eps)}) \otimes \id_E\big)\ket{\psi'}_{ABE'}}_2\\
    &= \dist_{\cS_{\mathrm{uni}}^{(1)}}(\mathrm{QFT},\mathrm{QFT}_{\mathrm{uni}}^{(\eps)})\\
    &\leq \eps \, ,
\end{align}
where the final step uses~\cref{lem_QFT_1D}.
The circuits required to perform the approximate QFT with error $\eps$ as explained above are visualized in~\cref{fig:circuit_gen} and can be implemented according to~\cref{lem_QFS,lem_FPE,lem_add} in 1D with depth
\begin{align}
\underbrace{\cO(1)}_{\mathrm{QFS}} + \underbrace{\cO(\log n)}_{\mathrm{ADD}} + \underbrace{\cO(\log \frac{n}{\eps^2})}_{\mathrm{QFT}_{\mathrm{uni}}}  + \underbrace{\cO(1)}_{\mathrm{QFS}^\dag}+ \underbrace{\cO(\log n)}_{\mathrm{ADD}^\dag}
=\cO\Big(\log \frac{n}{\eps^2}\Big) \, .
\end{align}
Note that $\mathrm{QFT}_{\mathrm{uni}}$ is applied to states in $\cS_{\mathrm{uni}}^{(1)}$, thus its depth is bounded by $\cO(\log \frac{n}{\eps^2})$ and as the QFS is classically controlled here, its depth is constant, as explained above.
\qed

\begin{remark} \label{remark:randomstate}
    Note, that the construction described above adds a random state and phase to the input state before applying $\mathrm{QFT}_{\mathrm{uni}}$ and shows that the expectation value of the error can be bounded. This implies, that for most states and phases the error can be bounded with the same scaling. In particular, also for $c_1=c_2=0$, i.e., $\mathrm{QFT}_{\mathrm{uni}}$ is likely to work for a random input state without adding a state and phase. However, the success probability is state-dependent, and thus it cannot be guaranteed to work for an arbitrary input state and we do not know if it works without the general construction for relevant states. As the set of bad states $B$ depends on the block size $k$ though, varying $k$ may also remove a certain input state from $B$ and thus allow for a successful implementation.
\end{remark}

\begin{remark}
Unlike in~\cite{hales2002}, the presented $\mathrm{QFT}$ acts only on two registers of $n$ qubits each, instead of four. However, for general input states that require the construction described in~\cref{eq:constr_generalinputs2}, we require a quantum adder. As shown in~\cref{lem_add}, this can be implemented on a line in logarithmic depth, but it requires in general $4n$ qubits. In our specific case, where a classical number is added to a quantum register, we can apply classically conditioned operations and reduce the number of qubits for the adder to $3n$. However, given the local connectivity, yet another $n$ qubits are required for teleporting the long-range entangling gates. As the adder is only applied when register $B$ is in state $\ket{0}_B$, these $n$ qubits can also be used as ancilla qubits. Thus, for general inputs we require $4n$ qubits for implementing the AQFT on $n$ qubits on a line. 
To adhere to the connectivity constraints for the different operations, all four registers need to be ``meshed", meaning that the registers the qubits are part of are alternating. As this implies that two qubits of the same register are separated by three qubits of the other registers, these qubits often need to be ``skipped", which increases the depth and size of the circuit only by a constant factor though. Still, this makes the implementation of $\mathrm{QFT}^{(\eps)}_{\mathrm{uni}}$ substantially cheaper.
The full circuit is sketched in~\cref{fig:circuit_gen}. 
\end{remark}


\bibliographystyle{arxiv_no_month}
\bibliography{bibliography}

\begin{thebibliography}{10}

\bibitem{shor_algo}
P.~Shor.
\newblock Algorithms for quantum computation: discrete logarithms and
  factoring.
\newblock In {\em Proceedings 35th Annual Symposium on Foundations of Computer
  Science}, pages 124--134, 1994.
\newblock
  \texttt{\href{http://dx.doi.org/10.1109/SFCS.1994.365700}{DOI:\,10.1109/SFCS.1994.365700}}.

\bibitem{factoring_book}
A.~K. Lenstra and H.~W. Lenstra.
\newblock {\em The Development of the {N}umber {F}ield {S}ieve}.
\newblock Springer, 1993.
\newblock
  \texttt{\href{http://dx.doi.org/10.1007/BFb0091534}{DOI:\,10.1007/BFb0091534}}.

\bibitem{clevewatrous2000}
R.~Cleve and J.~Watrous.
\newblock Fast parallel circuits for the quantum {F}ourier transform.
\newblock In {\em Proceedings 41st Annual Symposium on Foundations of Computer
  Science}, pages 526--536, 2000.
\newblock
  \texttt{\href{http://dx.doi.org/10.1109/SFCS.2000.892140}{DOI:\,10.1109/SFCS.2000.892140}}.

\bibitem{hales2002}
L.~R. Hales.
\newblock The quantum {F}ourier transform and extensions of the {A}belian
  hidden subgroup problem, 2002.
\newblock Available online: \url{https://arxiv.org/abs/quant-ph/0212002}.
\newblock {PhD Thesis UC Berkeley}.

\bibitem{Ahokas2004}
G.~R. Ahokas.
\newblock Improved algorithms for approximate quantum {F}ourier transforms and
  sparse {H}amiltonian simulations, 2004.
\newblock Available online: \url{http://hdl.handle.net/1880/41417}.
\newblock {Master's thesis, University of Calgary, Calgary, Canada}.

\bibitem{hoyer2005}
P.~H{\o}yer and R.~{\v S}palek.
\newblock Quantum fan-out is powerful.
\newblock {\em Theory of Computing}, 1(5):81--103, 2005.
\newblock
  \texttt{\href{http://dx.doi.org/10.4086/toc.2005.v001a005}{DOI:\,10.4086/toc.2005.v001a005}}.

\bibitem{buhrman2023state}
H.~Buhrman, M.~Folkertsma, B.~Loff, and N.~M.~P. Neumann.
\newblock State preparation by shallow circuits using feed forward.
\newblock {\em {Quantum}}, 8:1552, 2024.
\newblock
  \texttt{\href{http://dx.doi.org/10.22331/q-2024-12-09-1552}{DOI:\,10.22331/q-2024-12-09-1552}}.

\bibitem{HKDT23}
J.~Haah, R.~Kothari, R.~O’Donnell, and E.~Tang.
\newblock Query-optimal estimation of unitary channels in diamond distance.
\newblock In {\em 2023 IEEE 64th Annual Symposium on Foundations of Computer
  Science (FOCS)}, pages 363--390, 2023.
\newblock
  \texttt{\href{http://dx.doi.org/10.1109/FOCS57990.2023.00028}{DOI:\,10.1109/FOCS57990.2023.00028}}.

\bibitem{AKN98}
D.~Aharonov, A.~Kitaev, and N.~Nisan.
\newblock Quantum circuits with mixed states.
\newblock In {\em Proceedings of the Thirtieth Annual ACM Symposium on Theory
  of Computing}, STOC '98, page 20–30, New York, NY, USA, 1998. Association
  for Computing Machinery.
\newblock
  \texttt{\href{http://dx.doi.org/10.1145/276698.276708}{DOI:\,10.1145/276698.276708}}.

\bibitem{Draper06qcla}
T.~G. Draper, S.~A. Kutin, E.~M. Rains, and K.~M. Svore.
\newblock A logarithmic-depth quantum carry-lookahead adder.
\newblock {\em Quantum Info. Comput.}, 6(4):351–369, 2006.
\newblock Available online:
  \url{https://dl.acm.org/doi/10.5555/2012086.2012090}.

\bibitem{baeumer2024fanout}
E.~Bäumer and S.~Woerner.
\newblock Measurement-based long-range entangling gates in constant depth,
  2024.
\newblock Available online: \url{https://arxiv.org/abs/2408.03064}.

\bibitem{mosca1999hiddensubgroupproblemeigenvalue}
M.~Mosca and A.~Ekert.
\newblock The hidden subgroup problem and eigenvalue estimation on a quantum
  computer, 1999.
\newblock Available online: \url{https://arxiv.org/abs/quant-ph/9903071}.

\bibitem{zalka1998}
C.~Zalka.
\newblock Fast versions of {S}hor's quantum factoring algorithm, 1998.
\newblock Available online: \url{https://arxiv.org/abs/quant-ph/9806084}.

\bibitem{draper2000additionquantumcomputer}
T.~G. Draper.
\newblock Addition on a quantum computer, 2000.
\newblock Available online: \url{https://arxiv.org/abs/quant-ph/0008033}.

\bibitem{kahanamokumeyer2024}
G.~D. Kahanamoku-Meyer and N.~Y. Yao.
\newblock Fast quantum integer multiplication with zero ancillas, 2024.
\newblock Available online: \url{https://arxiv.org/abs/2403.18006}.

\end{thebibliography}

\end{document}